\documentclass[conference,10pt]{IEEEtran}
\makeatletter
\def\ps@headings{%
\def\@oddhead{\mbox{}\scriptsize\rightmark \hfil \thepage}%
\def\@evenhead{\scriptsize\thepage \hfil \leftmark\mbox{}}%
\def\@oddfoot{}%
\def\@evenfoot{}}
\makeatother
\pagestyle{headings}

%

%
\usepackage{cite}
\usepackage{hyperref}
\hypersetup{
    colorlinks,
    citecolor=black,
    filecolor=black,
    linkcolor=black,
    urlcolor=black
}

%
\ifCLASSINFOpdf
   \usepackage[pdftex]{graphicx}

\else
   \usepackage[dvips]{graphicx}

\fi
%
%

%
\usepackage[cmex10]{amsmath}
\usepackage{dsfont}
\usepackage{algorithmic}
\usepackage{algorithm}
\newtheorem{theorem}{Theorem}[section]

\newtheorem{pro}[theorem]{Proposition}
\newenvironment{proof}[1][Proof]{%
  \begin{trivlist}
  \item[\hskip \labelsep {\bfseries #1}]}{%
  \end{trivlist}}   %
\newcommand{\qed}{%
  \nobreak \ifvmode \relax \else \ifdim\lastskip<1.5em
  \hskip-\lastskip \hskip1.5em plus0em minus0.5em \fi \nobreak \vrule
  height0.75em width0.5em depth0.25em\fi}
\hyphenation{op-tical net-works semi-conduc-tor}

\begin{document}
%
\title{Single Video Performance Analysis for Video-on-Demand Systems}

\author{\IEEEauthorblockN{James Y. Yang}
\IEEEauthorblockA{Department of ECE\\Coordinated Science Laboratory\\ 
University of Illinois at Urbana-Champaign\\ 
1308 W Main Street\\ Urbana, IL 61801\\ %
    Email: jyyang3@illinois.edu}
\and
\IEEEauthorblockN{Bruce Hajek}
\IEEEauthorblockA{Department of ECE\\Coordinated Science Laboratory\\ 
University of Illinois at Urbana-Champaign\\ 
1308 W Main Street\\ Urbana, IL 61801\\ %
    Email: b-hajek@uiuc.edu}
}


%


\maketitle

\begin{abstract}
We study the content placement problem for cache delivery video-on-demand systems under static random network topologies with fixed heavy-tailed video demand. The performance measure is the amount of server load; we wish to minimize the total download rate for all users from the server and maximize the rate from caches. Our approach reduces the analysis for multiple videos to consideration of decoupled systems with a single video each. For each placement policy, insights gained from the single video analysis carry back to the original multiple video content placement problem. Finally, we propose a hybrid placement technique that achieves near optimal performance with low complexity.
\end{abstract}


%
\IEEEpeerreviewmaketitle

\section{Introduction}

A video-on-demand (VoD) system is an online video delivery system in which peers can request which videos to watch. In order for peers to watch videos without interruptions or large delays, the system must meet stringent delivery requirements - peers need to begin to download quickly and stream at the playback rate.
 
Traditionally, all video requests in a VoD system were handled by a central server. However, as the number of videos and peers grow, it becomes increasingly difficult for one central server to provide all the storage and bandwidth. Moreover, due to the increasing spread of peer locations, the size of the network grows and cost of delivery increases. Therefore, it is reasonable to design a cache delivery system in which each cache acts as a small server to help the central server. Each peer is connected to a subset of the caches and when a peer requests a video to watch, it sends out requests to all of its connected caches. However, if the connected caches do not have the requested video or the rate of upload from caches is insufficient, then the peers turn to the server for the missing parts. Therefore, a reasonable performance measure is the amount of server load.
 
In our analysis, we primarily study the content placement problem: Given a set of caches, what set of videos should be stored at each cache under storage constraints, current demand, and network topology? There are two types of storage methods for this problem: \textbf{1.} Whole storage - videos are stored as whole copies, and \textbf{2.} Fractional storage - with the help of source codes such as maximum-distance-separable (MDS) codes, videos are coded and stored as fractions of a copy. When a peer requests a video, the whole storage architecture requires the peer to download only from one of its connected caches containing the video. In the fractional storage architecture, a peer can simultaneously download from multiple connected caches containing the coded fractions of the video, and the download rates are summed by the additivity property of the MDS code.

Under the aforementioned storage methods, placement policies can be categorized into adaptive or fixed (non-adaptive) placement. Under adaptive placement policies, copies of a particular video are stored in the set of caches with the most received requests. Under fixed placement policies, caches are oblivious to the current demand so random placement of copies of a particular video has the same expected performance as deterministic placement of the copies.
 
Adaptive fractional storage placement minimizes the server load and can be implemented in a distributed way. However, it is inefficient in the sense that it requires overhead and computational power to encode and decode the videos. Optimal adaptive whole storage placement requires is combinatorial in nature and becomes computationally intractable when the system grows large. While fixed fractional storage placement is simple to implement and has linear performance, all caches store some fractions of the same video regardless of the actual connectivity.  Finally, fixed whole storage placement, provides a lower bound on performance. The objective of this work is to compare these four content placement options. We also propose a hybrid placement technique that achieves near optimal performance with less complexity than adaptive fractional storage placement.

This paper is organized as follows. The relationship with related work is described in \autoref{sec:relatedwork}. \autoref{sec:model} describes a model of a VoD cache system with multiple videos. It is a specialization of the model of \cite{HZhang} to the case of no network capacity constraints. \autoref{sec:singlevideoplacement} discusses each of the four placement policies for a single video, and the policies are compared in \autoref{sec:singlevideocomparison}. \autoref{sec:multivideopolicy} gives a general way to use a policy for placement of copies of a single video to produce a policy for placement of multiple videos. In addition, a hybrid policy for multiple video placement is described. \autoref{sec:multivideocomparison} compares the resulting methods for multiple video placement, and the paper is concluded with \autoref{sec:con}.

\section{Related Work} \label{sec:relatedwork}
VoD cache delivery systems have received wide attention for their benefits - the reduction of content delivery cost and the improvement of the end-user performance. Popular video sharing websites such as Youtube have been aggressively deploying cache servers of widely varying sizes at many locations around the world \cite{VKAdhikari}. In addition, cache servers in cache delivery systems appear as television set top boxes or personal computers in peer-to-peer networks like PPLive \cite{PPLive}. There are a number of works on content placement \cite{JAlmeida,YBoufkhad,BTan,XZhou,NLaoutaris,JWu,DApplegate,HZhang}. Almeida et al. \cite{JAlmeida} studied content placement and routing optimization in cache delivery systems subject to path delivery cost under a fixed topology. Boufkhad et al. \cite{YBoufkhad} derived bounds on the number of videos that can be served in the system subject to storage constraint, upload constraint, and cache connection degree. B. Tan et al. \cite{BTan} established an asymptotically optimal content placement strategy subject to a storage constraint and loss network model. Zhou et al. \cite{XZhou} focused on balancing the workload of caches. Laoutaris et al. \cite{NLaoutaris} studied cache storage resource allocation. Since our primary focus is on the content placement problem, the problem is only subject to a given storage constraint and does not involve any path delivery cost or cache upload constraint. In our formulation, we have a fixed topology generated by randomly established cache-peer connections subject to a fixed peer connection degree. As long as any of the caches a peer is connected to stores the requested video, the peer can be served by the cache delivery system without the help of the central server.

Wu and Li \cite{JWu} studied the optimal cache replacement algorithm and suggested that the simplest heuristics perform as well as the optimal algorithms, with very insignificant differences. While our formulation assumes random but static video requests, we also look for a simple suboptimal alternative to the optimal content placement algorithm. In this direction, we decompose the analysis of the content placement problem from the scale of the entire system with multiple videos into decoupled systems, each with only a single video of given popularity. Applegate et al. \cite{DApplegate} formulated the adaptive whole storage placement problem as a mixed integer program (MIP) subject to a storage constraint and link bandwidth constraint. The adaptive whole storage placement problem is solved approximately by MIP. In our work, we derive an upper and a lower bound on the performance of the single video adaptive whole storage placement policy using analytical and heuristic approaches. Zhang \cite{HZhang} used MDS codes to relax the integer constraint and converted the integer program into a convex, adaptive fractional storage placement problem that can be solved exactly. This result provides an upper bound on the performance of any content placement policy, including all single video placement policies. 

With the insights gained from single video placement policy analysis, we return to the original content placement problem by constructing a general method for the four placement policies for multiple videos, which places copies of videos optimally for each policy. Then, motived by observations from single video comparisons, we introduce a hybrid storage multiple video placement policy which is a suboptimal alternative to the adaptive fractional storage placement policy from Zhang \cite{HZhang}. 

To the best of our knowledge, we have been the first to decompose the analysis of the content placement problems for the system with multiple videos into decoupled systems with a single video each.

\section{Model and Assumptions} \label{sec:model}
 
To focus strictly on the analysis of content placement policies in the VoD cache system, we adapt a simple closed homogeneous system model. We consider the server to be external to the cache system and to provide help only when caches within the system cannot satisfy all the peer demand, as shown in \autoref{fig:Cachesystem}. In the model, the numbers of peers, videos, and caches remain fixed. Each peer is connected to an equal fixed number of caches. We assume peer location is uniformly distributed, so peer connections (the set of caches connected to a peer) are also uniformly distributed.

\begin{figure}[htb]
  \begin{center}
\vspace{-.3in}
\hspace{-0.5in}
 \includegraphics[width=0.55\textwidth,height=0.55\textheight, clip]{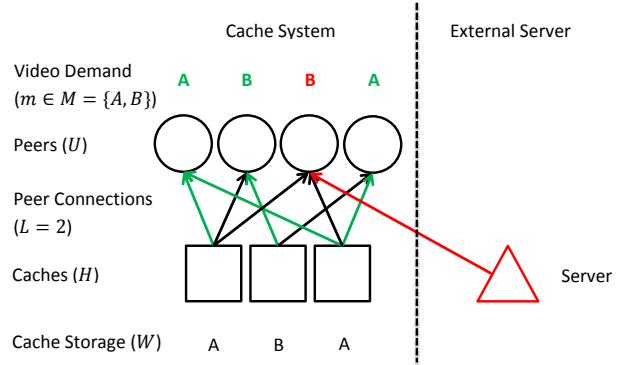}
\vspace{-3in}
    \caption{Cache system of three caches, four peers and two random connections per peer}
\vspace{-0.1in}
    \label{fig:Cachesystem}
  \end{center}
\end{figure}

Under this setting, we have a model of a random but fixed network topology (a graph of cache-peer connections). Each video in the system occupies the same storage space and has the same playback rate. Peers in the system watch videos continuously and independently select videos according to a heavy-tailed popularity distribution, which we assume is a $Zipf$ distribution. Based on the law of large numbers (LLN), the number of peers requesting video $i$ in any instance is close to the expected number. Since the expected number of peers connected to each cache is the same, we assume each cache has sufficient bandwidth to provide the requested streaming rates to all connected peers. We ignore link bandwidth constraints imposed by an underlay network. At any time, peers can only download as much as the required playback rates, so they do not store any future video chunks.

\subsection{Problem Formulation}
The notation of the model is summarized as follows:
 
\begin{itemize}
\item $H$: set of caches
\item $U$: set of peers
\item $M$: set of videos, ordered according to popularity
\item $G$: set of all possible network topology graphs (i.e. peer-cache connection graph)
\item $N^g_u$: set of caches connected to peer $u$ under graph $g\in G$
\item $H_m$: set of caches storing video $m$
\item $U_m$: set of users requesting video $m$
\item $C_m$: number of copies of video $m$ stored in the cache system (possibly sum of fractional parts)
\item $S_h$: storage capacity of cache $h$ in units of videos
\item $Zipf(K,\ 0.8)$: probability distribution over $\{1, ..., K\}$ given by $p(m)=\frac {1/m^{0.8}}{\sum_{n=1}^{K} (1/n^{0.8})}$
\item $p(m)$: video popularity distribution (i.e. probability the video requested by a peer is video $m$)
\item $\alpha(m)$: probability video $m$ is present in a typical cache given by $\alpha(m)=\frac{C_m}{|H|}$
\item $x_{hu}$: download rate of peer $u$ from cache $h$ (possibly a fractional rate)
\item $W_{hm}$: fraction of video $m$ on cache $h$\\
\end{itemize}

Given the set of caches $H$ and the set of peers $U$, under a fixed network topology $g\in G$, each peer $u$ is connected to $|N^g_u|=L$ random caches selected uniformly. Given the set of videos $M$, the probability a peer requests video $m$, $p(m)$, follows the Zipf($|M|$,0.8) distribution, and the mean number of peers requesting video $m$ satisfies $E[|U_m|]=|U|\cdot p(m)$. Each cache $h$ has an equal storage capacity of $S_h$ units of video and stores $W_{hm}$ units of video $m$, where $W_{hm}\in \{0,1\}$ for whole storage placement policies and $W_{hm}\in [0,1]$ for fractional storage placement policies. Each peer $u$ downloads at rate $x_{hu}$ from cache h for each of its connected caches $h \in N^g_u$, where $x_{hu}\in \{0,1\}$ for whole storage placement policies and $x_{hu}\in [0,1]$ for fractional storage placement policies.

The cost function to be minimized is the server load, or equivalently, the utility function to be maximized is the total download rate for all users from the caches. The following optimization problem for a given graph $g=(N^g_u:u\in U)$, was formulated in \cite{HZhang}:
\begin{eqnarray} \label{eqn:optimizationproblem} %
   	& \max & \sum_{u\in U}\min\left\{ \sum_{h\in N^g_u} x_{hu},1\right\} \\
	\nonumber & w.r.t. & x_{hu} \mbox{ and } W_{hm}\  \forall u\in U, h\in H, m\in M\\
	\nonumber & s.t. &  x_{hu} \leq W_{hm} \ \forall h,u\ and\ m:u\in U_m,\\
   	\nonumber & \  & \sum_{m\in M} W_{hm} \leq S_h \ \forall h,\\
	\nonumber & \ & W_{hm} \in\{0,1\} \ \forall h,m \ (\mbox{whole storage placement}), or\\
  	\nonumber & \ & W_{hm} \in[0,1] \ \forall h,m \ (\mbox{fractional storage placement}).\\\nonumber
\end{eqnarray}

\subsection{Fractional Storage and Adaptive Placement Methods}

Fractional storage is a way of storing video copies that allows each copy to be split into pieces of smaller sizes and these pieces are placed in different caches. A fractional storage method is said to have the additivity property, if the useful download rate of a peer is the sum of the download rates from the peer's connected caches. 

One fractional storage method that has the additivity property is source coding using MDS code. Each video chunk is coded into pieces of smaller size and these pieces are placed into caches. Every piece a peer downloads is useful so any subset of the coded pieces that approximately sums to the chunk size can be used to reconstruct the original video chunk. Another fractional storage method that has the additivity property is time-sharing. Each video chunk is split into substreams through time division and these substreams are placed into caches. Only the non-redundant substreams a peer downloads are useful. The pieces of video chunks can also be viewed as video chunks of smaller size, so the placement policy is whole storage again with the whole units being the substreams. Therefore, time-sharing fractional storage methods have more constraints than source coding fractional storage methods. We will use source coding for fractional storage in this paper.

Adaptive placement is a way of placing video copies that maximizes the total download rate for all users from the caches in response to the current demand, i.e. copies of a particular video are stored in the set of caches with the most received requests. Fixed (non-adaptive) placement is placement without knowledge of user requests.

\section{Single Video Placement Policy Analysis} \label{sec:singlevideoplacement}
This section analyzes the four storage method and placement policy pairs discussed in the Introduction, for a single video.

\subsection{Fixed Whole Storage Placement Policy} \label{sec:singlevideoplacementnonadaptivewhole}
For fixed whole storage placement policies, cache content is not placed in according to the actual demand. Given an integer number of video copies, $C$, to be placed in the cache system, a set of caches of cardinality $C$ is randomly selected to store one whole video copy each, and $|H_m|=C$. A peer $u$ can download from at most one of its $L=|N^g_u|$ connected caches that store the requested video. Let $p_{miss}(m)$ denote the probability a peer is not connected to a cache storing video $m$, given by $p_{miss}(m)= \frac{\binom{|H|-C}{L}}{\binom{|H|}{L}}$. Given the set of peers, $U$, requesting video $m$, the expected number of peers that are served by the cache system without help from the server is given by:

\begin{eqnarray} \label{eqn:nonadaptiveatomic} %
	\nonumber && E[\mbox{total download rate provided by caches}] \\ 
	\nonumber & = & E[\mbox{\#\ of peers requesting video $m$ served by caches}] \\
	\nonumber & = & (\mbox{\# of peers requesting video $m$})\cdot \\
	\nonumber & \ & \ \ \ \ \ P\{\mbox{a peer is connected to}\\
	\nonumber & \ & \ \ \ \ \ \ \ \ \ \ \mbox{at least one cache storing video $m$}\}\\
  	\label{eqn:nonadaptiveatomicref} & = & |U|\cdot (1-p_{miss}(m))
\end{eqnarray}

The fixed whole storage placement policy provides a good benchmark for all single video placement policies, because adaptive placement policies and fractional storage placement policies, discussed below, are generalizations of fixed whole storage placement policies. Adaptive placement policies can be made to perform better than fixed placement policies because caches see the actual demand and their video catalogs change accordingly. Also, fixed fractional storage placement policies can be made to perform better than fixed whole storage placement policies, which will be shown in the next subsection, \ref{sec:singlevideoplacementnonadaptivefrac}. 

For purposes of optimization in multiple video systems, we modify the above description to consider the case that the number of copies of video $m$ is a random non-negative integer $X$, with $C=E[X]$. Specifically, let $\alpha(m)$ be the probability video $m$ is present in a typical cache. Then for some $\theta$ with $0\leq\theta<1$, $\alpha(m)\cdot|H|=C=\lfloor C\rfloor+\theta$. Let $X$ be the minimum variance integer valued random variable with mean $\alpha(m)\cdot |H|$. So, $P\{X=\lfloor C\rfloor\}=1-\theta$ and $P\{X=\lfloor C\rfloor+1\}=\theta$. Given $X$, the video is assumed to be placed into a set of caches of cardinality $X$, with all $\binom{|H|}{X}$ possibilities having equal probability. The following proposition provides an upper bound on $p_{miss}(m)$ and therefore a lower bound on the expected total download rate for video $m$: provided by the caches.

\begin{pro}\label{pro:pmiss}   $p_{miss}(m) \leq (1-\alpha(m))^L.$  \end{pro}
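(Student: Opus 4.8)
The plan is to write the miss probability as an expectation over the random number of copies $X$ and then reduce the claim to a comparison of two expectations of a single convex function. Conditioned on $X=x$ copies being placed uniformly at random, a peer misses video $m$ exactly when all $L$ of its connected caches fall among the $|H|-x$ caches without the video, so the conditional miss probability is $q(x):=\binom{|H|-x}{L}/\binom{|H|}{L}$ (with $q(x)=0$ once $x>|H|-L$). Hence $p_{miss}(m)=E[q(X)]$, where $X$ is the two-point variable on $\{\lfloor C\rfloor,\lfloor C\rfloor+1\}$ with mean $C=\alpha(m)\,|H|$. First I would observe that the target value $(1-\alpha(m))^L$ is itself of the form $E[q(Z)]$: it is the miss probability of the alternative placement in which each cache independently holds the video with probability $\alpha(m)$, whose copy count is $Z\sim\mathrm{Binomial}(|H|,\alpha(m))$ with the same mean $C$, and under independent placement the $L$ caches miss independently, giving $E[q(Z)]=(1-\alpha(m))^L$. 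Thus it suffices to prove $E[q(X)]\le E[q(Z)]$, where $X$ and $Z$ are integer-valued with the common mean $C$.

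Next I would establish that $q$ is convex on the integers. A direct second-difference computation gives $q(x+1)-2q(x)+q(x-1)=\binom{|H|-x-1}{L-2}/\binom{|H|}{L}\ge 0$, using the Pascal identity $\binom{t-1}{L}-2\binom{t}{L}+\binom{t+1}{L}=\binom{t-1}{L-2}$ with $t=|H|-x$ (the case $L=1$ being affine). Let $\ell$ be the affine function agreeing with $q$ at the consecutive integers $\lfloor C\rfloor$ and $\lfloor C\rfloor+1$. Because these are adjacent integers, every integer point is either one of the two (where $q=\ell$) or lies outside the open interval between them (where a convex function lies on or above its secant line), so $q(n)\ge\ell(n)$ for \emph{every} integer $n$. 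Consequently, for any integer-valued random variable $Y$ with $E[Y]=C$ we get $E[q(Y)]\ge E[\ell(Y)]=\ell(C)$ by affineness, and since $X$ is supported on exactly the two interpolation points, $\ell(C)=(1-\theta)q(\lfloor C\rfloor)+\theta q(\lfloor C\rfloor+1)=E[q(X)]=p_{miss}(m)$. Applying this with $Y=Z$ yields $p_{miss}(m)=E[q(X)]\le E[q(Z)]=(1-\alpha(m))^L$.

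The main obstacle, and the reason the proof needs this convex-order route rather than a one-line estimate, is the direction of the inequality. The tempting term-by-term bound $q(x)\le(1-x/|H|)^L$ is true, but combining it with Jensen's inequality (the map $t\mapsto t^L$ being convex) points the \emph{wrong} way and cannot close the gap; the randomness of $X$ must be handled through the fact that the minimum-variance two-point distribution is the convex-order minimum among integer-valued distributions with a given mean. The remaining routine verifications are the second-difference convexity identity and the conditioning identity $E[q(Z)]=(1-\alpha(m))^L$ for the binomial benchmark, both of which I expect to be straightforward.
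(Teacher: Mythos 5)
Your proof is correct, but it follows a genuinely different route from the paper's. The paper argues algebraically: it factors the two terms of the miss probability to obtain the closed form \eqref{eq.pmiss_bnd1}, $p_{miss}(m)=\frac{n-k-\theta L}{n}\prod_{i=1}^{L-1}\frac{n-k-i}{n-i}$ (with $n=|H|$, $k=\lfloor C\rfloor$), then lower-bounds $(1-\alpha(m))^L$ by its tangent line at $\theta=0$, using convexity of $\theta\mapsto\left(\frac{n-k-\theta}{n}\right)^L$, to get \eqref{eq.1minusalpha_bnd1}, and finishes with the term-by-term comparison $\frac{n-k-i}{n-i}\le\frac{n-k}{n}$. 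You instead set up a convex-order comparison: discrete convexity of $q(x)=\binom{|H|-x}{L}/\binom{|H|}{L}$ via the Pascal second-difference identity, the secant-line argument showing that the minimum-variance two-point law on adjacent integers minimizes $E[q(\cdot)]$ among integer-valued laws with the same mean, and the recognition of $(1-\alpha(m))^L$ as $E[q(Z)]$ for the binomial copy count arising from i.i.d.\ placement with per-cache probability $\alpha(m)$. Both proofs hinge on convexity, but applied to different objects: the paper to a continuous function of $\theta$, you to the discrete function $q$. Your argument buys more: it proves the strictly stronger statement $p_{miss}(m)\le E[q(Y)]$ for \emph{every} integer-valued copy count $Y$ with mean $\alpha(m)|H|$ (the proposition being the special case $Y\sim\mathrm{Binomial}(|H|,\alpha(m))$), and it gives a probabilistic interpretation --- concentrating the number of copies can only reduce the miss probability --- whereas the paper's proof is shorter, self-contained, and needs no benchmark placement model. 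Your closing remark is also well taken: the naive bound $q(x)\le(1-x/|H|)^L$ combined with Jensen's inequality on $t\mapsto t^L$ runs in the wrong direction, which is exactly why both your secant trick and the paper's tangent trick are needed.
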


\begin{proof}
On one hand, factoring out common terms yields:
\begin{eqnarray}
p_{miss}(m) & = & (1-\theta)\prod^{L-1}_{i=0}\left( \frac{n-k-i}{n-i}  \right)\\
	\nonumber &&~~  +  \theta\prod^{L-1}_{i=0}\left( \frac{n-k-i-1}{n-i}  \right)\\
& = &  \frac{n-k-\theta L}{n}\cdot \prod^{L-1}_{i=1}\frac{(n-k-i)}{(n-i)}  \label{eq.pmiss_bnd1}
\end{eqnarray}
On the other hand, using the convexity of $(a-b\theta)^L$ as a function of $\theta,$ and the fact $f(\theta)\geq  f(0)+\theta f'(0)$ for a convex function $f,$
\begin{eqnarray}
(1-\alpha(m))^L & = &  \left( \frac{n-k-\theta}{n}\right) ^L  \nonumber   \\
& \geq &   \left( \frac{n-k}{n}\right) ^L - \frac{\theta L}{n}\left( \frac{n-k}{n}\right)^{L-1} \nonumber   \\
& = & \left(  \frac{n-k-\theta L}{n} \right)  \left( \frac{n-k}{n}\right)^{L-1}  \label{eq.1minusalpha_bnd1}
\end{eqnarray}
Comparing \eqref{eq.pmiss_bnd1} to \eqref{eq.1minusalpha_bnd1} completes the proof of the proposition.
\qed
\end{proof}

The single video performance of the fixed whole storage placement policy given by (\ref{eqn:nonadaptiveatomicref}) and its lower bound are plotted in \autoref{fig:Nonadaptiveatomic_example} for a video $m$ requested by 20 peers (i.e. $|U|=20$). The expected total download rate provided by the caches is plotted versus the expected number of copies of the video stored a system with 50 caches and four random connections per peer. It can be seen that, due to the randomness in the fixed whole storage placement policy, nearly 3/4 of the caches need to store a copy of the video in order for every peer to be served by the cache system without help from the server. If $L<<|H|$, then whether one of the caches a peer is connected to has the video is approximately independent of whether the other caches connected to have the video. Therefore, $p_{miss}(m) \approx  (1-\alpha(m))^L$, so the two curves on the plot are nearly identical.

\begin{figure}[htb]
  \begin{center}
 \includegraphics[width=0.53\textwidth,height=0.25\textheight, clip]{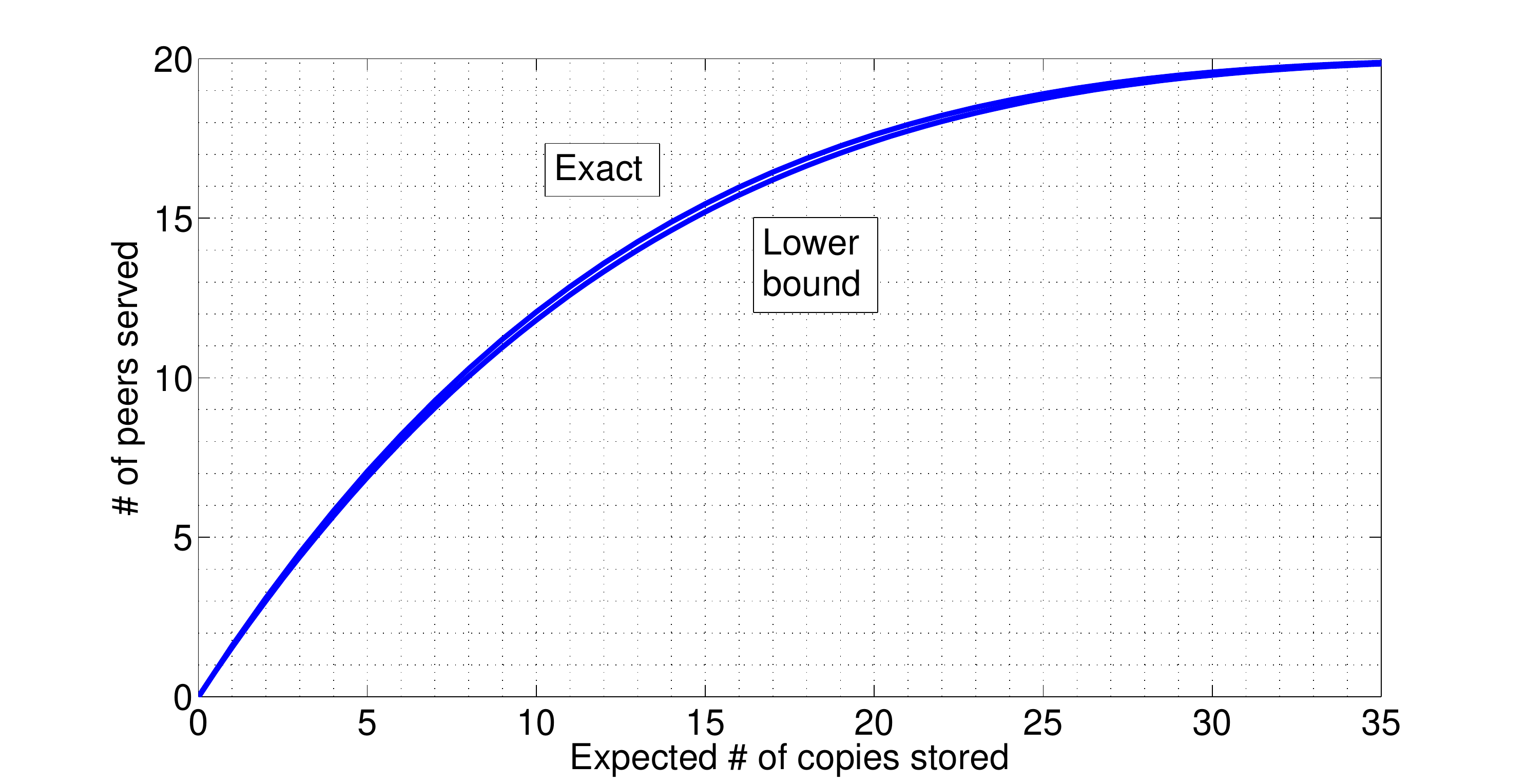}
    \caption{Single video performance of a cache system of 50 caches, 20 peers and four random connections per peer}
    \label{fig:Nonadaptiveatomic_example}
  \end{center}
\vspace{-0.2in}
\end{figure}

\subsection{Fixed Fractional Storage Placement Policy} \label{sec:singlevideoplacementnonadaptivefrac}
For fixed fractional storage placement policies, caches' video catalogs also remain fixed regardless of the demand. For the fixed uniform fractional storage placement policy, with a given number (possibly non-integer) of video copies, $C$, in the cache system, each cache stores the uniform fraction, $W_{h}=\frac{C}{|H|}$, of video $m$. 

\begin{pro}\label{propositionequalfrac}
Among all fixed fractional storage placements, the uniform fractional storage placement maximizes the expected total download rate.
\begin{proof}
It suffices to show the expected download rate of a peer requesting video $m$ served by $L$ random connections is maximized by the uniform fractional storage placement.
 
The download rate of peer $u$ from cache $h$ is given by $ x_{hu} =  W_{hm}\cdot\mathds{1}{\left\{ h\in N_u\right\}}$. Therefore,
\begin{eqnarray}
	\nonumber &&E[\mbox{download rate of peer $u$ from caches}] \\
	\nonumber & = & E\left[\min\left\{\sum_{h\in N^g_u}x_{hu},1\right\}\right]\\
  	\label{eqn:jensenineq} & \leq & \min\left\{\sum_{h\in N^g_u}E[x_{hu}],1\right\} \mbox{ by Jensen's inequality}\\
	\nonumber & = & \min\left\{\sum_{h\in N^g_u}W_{h}\cdot P\left\{h\in N^g_u\right\},1\right\}\\
	\nonumber & = & \min\left\{\sum_{h\in N^g_u}W_{h}\cdot \frac{L}{|H|},1\right\}\\
	& = & \min\left\{L\frac{C}{|H|},1\right\}
\end{eqnarray}
Equality in (\ref{eqn:jensenineq}) is achieved if $\sum_{h\in N^g_u}x_{hu}\equiv\sum_{h\in N^g_u}E[x_{hu}]$, which is true for uniform fractional storage placement.
\qed
\end{proof}
\end{pro}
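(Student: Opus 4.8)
The plan is to combine the symmetry of the random topology with the concavity of the per-peer utility $y\mapsto\min\{y,1\}$, reducing the problem to a single peer and then showing that the best achievable per-peer performance depends only on the total storage budget. By linearity of expectation and the fact that all peers have identically distributed connection sets $N_u^g$, the expected total download rate is a sum over peers of the expected download rate of a typical peer; hence it suffices to maximize the expected useful download rate of one peer requesting video $m$.

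First I would fix an arbitrary fixed fractional placement $(W_h)_{h\in H}$ with total budget $C=\sum_{h\in H}W_h$ and write the peer's useful rate as $\min\{S,1\}$, where $S=\sum_{h\in N_u^g}W_h$ is the (random) total fraction the peer can reach through its $L$ connections. Since $\min\{y,1\}$ is concave in $y$, Jensen's inequality yields $E[\min\{S,1\}]\le\min\{E[S],1\}$. I would then compute $E[S]$ using $P\{h\in N_u^g\}=L/|H|$, which holds because the $L$ caches are selected uniformly and the placement is oblivious to the topology; this gives $E[S]=(L/|H|)\sum_{h\in H}W_h=LC/|H|$. The decisive observation is that the resulting bound $\min\{LC/|H|,1\}$ depends on the placement only through the aggregate budget $C$, not on how the fractions are spread across caches.

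It then remains to exhibit a placement meeting this bound. Under the uniform placement $W_h\equiv C/|H|$ the reachable total collapses to $S=\sum_{h\in N_u^g}(C/|H|)=L\cdot C/|H|$, a deterministic constant independent of which $L$ caches are connected. Because Jensen's inequality is tight exactly when its argument is almost surely constant, equality holds for the uniform placement, so it attains the common upper bound and is therefore optimal. The main obstacle is conceptual rather than computational: one must spot the correct direction of Jensen's inequality and realize that its upper bound is a function of $C$ alone, after which the tightness for the uniform placement is the clean concluding step. A secondary point to verify carefully is the probability computation $P\{h\in N_u^g\}=L/|H|$ together with the independence of the fixed placement from the random connections, which is what makes the factorization $E[x_{hu}]=W_h\cdot P\{h\in N_u^g\}$ legitimate.
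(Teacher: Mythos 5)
Your proposal is correct and follows essentially the same argument as the paper: reduce to a single typical peer, bound its expected useful rate via Jensen's inequality applied to the concave function $\min\{y,1\}$, compute $E[S]=LC/|H|$ using $P\{h\in N^g_u\}=L/|H|$, observe the bound depends only on $C$, and conclude that the uniform placement achieves equality because $S$ becomes deterministic. Your write-up is in fact slightly cleaner than the paper's, since you make explicit the summation over all of $H$ and the tightness condition for Jensen, but the substance is identical.
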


A peer can download from all of its $L=|N^g_u|$ connected caches and all peers download at the same rate. However, peers may download at a fraction of the required playback rate, so possibly none of them are fully served by the caches. Summing the download rates from all peers requesting video $m$, we find the total download rate provided by the cache system for the set of peers, $U$, without help from the server:
\begin{eqnarray} \label{eqn:nonadaptivefractional} %
	\nonumber && (\mbox{total download rate provided by caches}) \\
	\nonumber & = & \sum_{u\in U}\min\left\{\sum_{h\in N^g_u} \frac{C}{|H|},1\right\}\\
  	& = & |U|\cdot \min\left\{L\frac{C}{|H|},1\right\}
\end{eqnarray}

Given the number of copies of video $m$, $C$, we can find the expected fraction of total download rate provided by the cache system without help from the server. Since caches are uniformly randomly selected by peers, the expected download rate served by each connection is the same for any fixed fractional storage placement. On one hand, the download rate served by each connection under the fixed uniform fractional storage placement is deterministic. As a result, for each additional video copy placed in the cache system, a peer's download rate increases linearly and deterministically. On the other hand, the download rate served by each connection under other fixed fractional storage placements is random. Therefore, as shown in the following proposition, fixed uniform fractional storage placement outperforms any other fixed fractional storage placements, in particular the fixed whole storage placement.

Note that the whole storage placement is a special case of fractional storage placement where fractions are constrained to be zero or one. The performance of the fixed uniform fractional storage placement policy is plotted in \autoref{fig:Nonadaptivefluid_example} for a video $m$ requested by 20 peers (i.e. $|U|=20$). The total download rate provided by the caches is plotted versus the number of copies of the video stored in the cache system. Due to the linear and deterministic properties of the fixed fractional storage placement policy, the marginal performance gain of an additional video copy is always the same until the number of copies reaches the minimum number needed to serve the entire video to all peers. Every cache only needs to store a fraction, $\frac{1}{L}$, of the video in order for every peer to be served by the cache system without help from the server.

\begin{figure}[htb]
  \begin{center}
 \includegraphics[width=0.53\textwidth,height=0.25\textheight, clip]{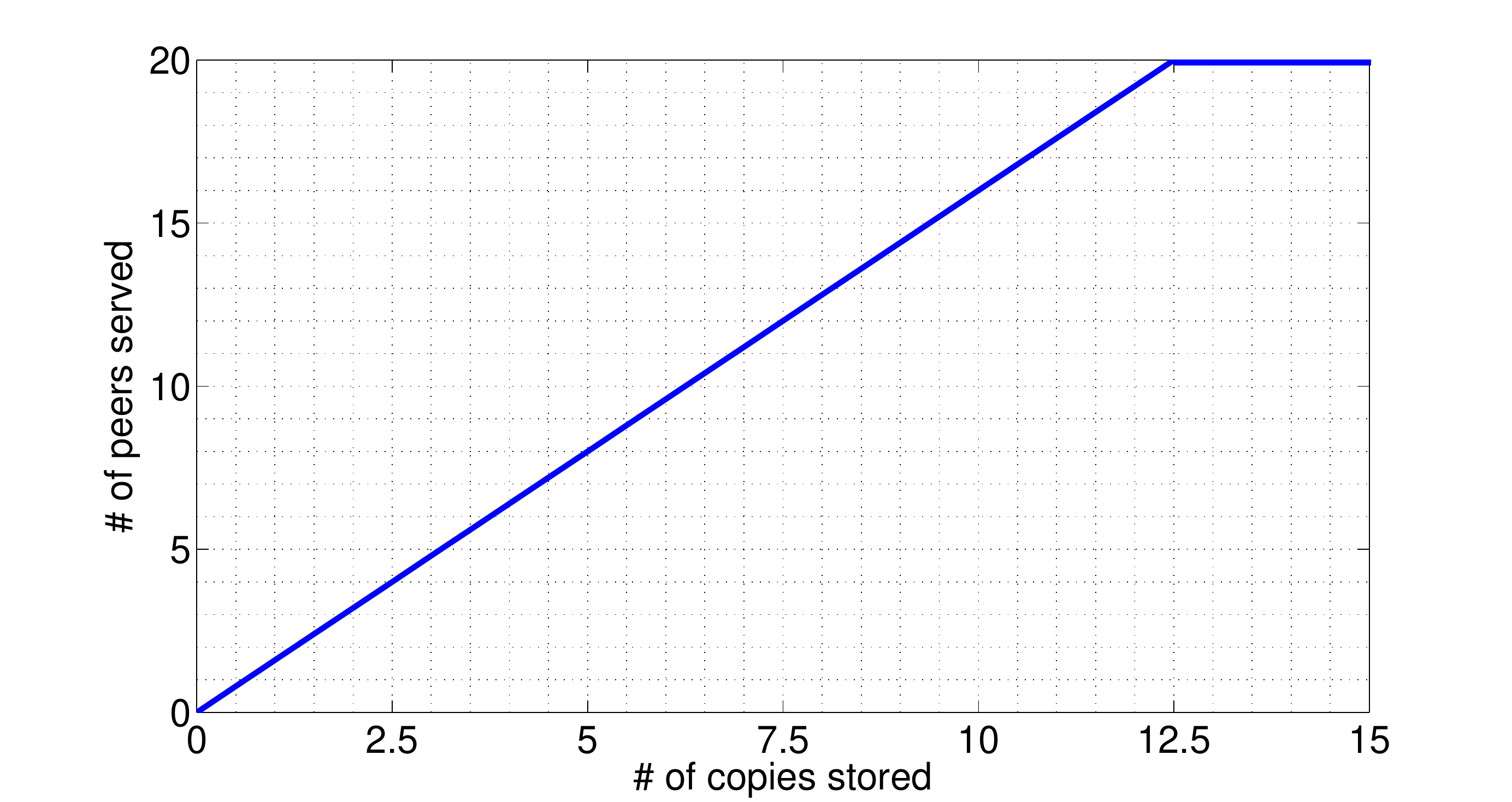}
    \caption{Single video performance of a cache system of 50 caches, 20 peers and four random connections per peer}
    \label{fig:Nonadaptivefluid_example}
  \end{center}
\vspace{-0.2in}
\end{figure}

\subsection{Adaptive Whole Storage Placement Policy}
For adaptive whole storage placement policies, caches select their video catalogs in response to the actual demand. For optimal adaptive whole storage placement, given an integer number of video copies, $C$, to be placed in the cache system, the set of caches of cardinality $C$ that is connected to the largest group of peers requesting video $m$ stores the video; we denote this set of caches by $H_m$. The total download rate provided by the cache system for the set of peers, $U$, without help from the server is the number of peers requesting video $m$ that are connected to $H_m$. Although finding $H_m$ is an NP-hard problem (because the set covering problem is NP-complete), but in this context near optimal performance is provided by a greedy algorithm.

Let $p_{miss}(m)$ denote the probability a peer is not connected to a cache storing video $m$, given by $p_{miss}(m)= \frac{\binom{|H|-C}{L}}{\binom{|H|}{L}}$. We have the following upper bound:

\begin{pro}\label{pro:propositionupbd}
$E$[total download rate provided by $H_m$] $\leq \sum_{\tau=1}^{|U|}\min\left(\binom{|H|}{C}\cdot \mbox{Bin}^c\left(|U|, 1-p_{miss}(m),\tau-1\right),1\right)$, where $Bin^c(N,p,K)$ is the complementary CDF of the binomial distribution at $K$ with corresponding number of trials $N$, and probability of success for each trial $p$. i.e. $Bin^c(N,p,K)=1-Bin(N,p,K)$. The index $\tau$ ranges over the positive integers less than or equal to $|U|$, the number of peers requesting video $m$.
\begin{proof} 
Consider any fixed set $A$ of cardinality $C$,
\begin{eqnarray} \label{eqn:connectedprob1} %
	\nonumber &&P\{\mbox{a given peer is connected}\\
	\nonumber	&\ & \ \ \ \ \ \  \mbox{to at least one cache in $A$}\}\\
	&=&1-p_{miss}(m)
\end{eqnarray}
So, the number of peers connected to at least one cache in A has the binomial distribution with parameters $N=|U|$ and $p$ from (\ref{eqn:connectedprob1}). Thus, for any integer $\tau \geq 1$,
\begin{eqnarray} \label{eqn:adaptivewholeprob2} %
	\nonumber && P\{\mbox{$A$ covers at least $\tau$ peers}\}\\
	&=& Bin^c(|U|,1-p_{miss}(m),\tau-1)
\end{eqnarray}
Let $Y$ be the number of sets of caches of cardinality $C$ that have at least $\tau$ peers connected to them. Since there are $\binom{|H|}{C}$ sets of caches of cardinality $C$ and any such set of caches has probability $p$ to cover at least $\tau$ peers as $A$,
\begin{eqnarray} \label{eqn:adaptivewholetao} %
	\  E[Y] & = & \binom{|H|}{C}\cdot P\{\mbox{$A$ covers at least $\tau$ peers}\}\\
	& = & \binom{|H|}{C}\cdot Bin^c(|U|,1-p_{miss}(m),\tau-1)\label{eqn:adaptivewholetaosub}
\end{eqnarray}
By the first moment bound, for the non-negative integer random variable $Y$,
\begin{eqnarray} \label{eqn:adaptivewholebd} %
	\nonumber && P\{\mbox{\# of peers served by $H_m$ caches}\geq \tau\}\\
	 & = & P\{Y \geq 1\} \ \ \leq \ \ \min\{E[Y],1\}
\end{eqnarray}
Finally, 
\begin{eqnarray}\label{eqn:expectation}
	 \nonumber &&E[\mbox{total download rate provided by the $H_m$}]\\
	 & = &  \sum_{\tau=1}^{\tau=|U|}P\{\mbox{\# of peers served by $H_m$ caches}\geq \tau\}
\end{eqnarray}
Therefore, (\ref{eqn:adaptivewholetaosub})-(\ref{eqn:expectation}) yields the proposition
\qed
\end{proof}
\end{pro}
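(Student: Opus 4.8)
The plan is to express the expected download rate as a sum of tail probabilities and then bound each tail probability by a union bound over all candidate cache sets. The starting observation is that, for a fixed topology $g$, the quantity ``total download rate provided by $H_m$'' is an integer-valued random variable $Z$ equal to the number of peers covered by the optimal set $H_m$; since $H_m$ is by definition the set of $C$ caches covering the most requesting peers, $Z$ is exactly the maximum, over all size-$C$ cache sets $A$, of the number of requesting peers connected to $A$. Because $0\le Z\le |U|$, I would use the tail-sum identity
\[
E[Z]=\sum_{\tau=1}^{|U|}P\{Z\ge\tau\},
\]
which reduces the problem to bounding each term $P\{Z\ge\tau\}$.

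Next I would translate the maximum into an existence statement: $\{Z\ge\tau\}$ holds if and only if there exists at least one size-$C$ set $A$ covering at least $\tau$ peers. Introducing $Y=Y_\tau$, the number of size-$C$ cache sets that cover at least $\tau$ peers, this event is exactly $\{Y\ge 1\}$. The first moment bound then gives $P\{Y\ge 1\}\le E[Y]$, and since any probability is at most one, $P\{Z\ge\tau\}\le\min\{E[Y],1\}$. This single step produces both the $\min\{\cdot,1\}$ appearing in the statement and the union-bound factor $\binom{|H|}{C}$.

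To evaluate $E[Y]$ I would invoke linearity of expectation together with the symmetry of the random topology: writing $Y=\sum_A \mathds{1}\{A\text{ covers}\ge\tau\text{ peers}\}$, every fixed $A$ of cardinality $C$ has the same covering probability, so $E[Y]=\binom{|H|}{C}\cdot P\{A\text{ covers}\ge\tau\text{ peers}\}$ for an arbitrary fixed $A$. For such an $A$, each of the $|U|$ peers independently selects its $L$ caches, and the event that a peer meets at least one cache in $A$ has probability $1-p_{miss}(m)$ with $p_{miss}(m)=\binom{|H|-C}{L}/\binom{|H|}{L}$. Hence the number of peers covered by $A$ is $\mathrm{Binomial}(|U|,\,1-p_{miss}(m))$, giving $P\{A\text{ covers}\ge\tau\text{ peers}\}=\mathrm{Bin}^c(|U|,\,1-p_{miss}(m),\,\tau-1)$. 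Substituting this back into the tail sum yields the claimed inequality.

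The main point requiring care is the reduction of the optimal-coverage variable to the existence event $\{Y\ge 1\}$: one must check that the optimality defining $H_m$ makes $\{Z\ge\tau\}$ genuinely equivalent to ``some size-$C$ set covers $\tau$ peers,'' so that no slack beyond the first-moment bound is introduced by the reduction itself. Everything else is a routine symmetry-plus-binomial computation. I expect the resulting bound to be loose precisely when $\binom{|H|}{C}$ is large and the covering events for overlapping sets are strongly correlated, since the union bound discards these correlations; this is the inherent cost of the simple first-moment argument.
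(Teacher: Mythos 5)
Your proposal is correct and follows essentially the same route as the paper's own proof: the tail-sum identity for the expectation, the reduction of the event $\{Z\ge\tau\}$ to $\{Y\ge 1\}$ via the optimality of $H_m$, the first-moment (Markov) bound $P\{Y\ge 1\}\le\min\{E[Y],1\}$, and the evaluation of $E[Y]$ by symmetry and the binomial law of the number of peers covered by a fixed size-$C$ set. Your added care in justifying the equivalence between the maximum-coverage event and the existence event is a point the paper glosses over, but it is the same argument.
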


Next, we introduce a heuristic method for obtaining a suboptimal set of caches with cardinality $C$. In each iteration, the algorithm first finds a cache that is connected to the largest number of peers requesting video $m$ and stores a copy in that cache. Then, it peels away (removes) all peers connected to that cache requesting video $m$. The algorithm is shown in \autoref{alg:adaptivewholeyalg}.

\begin{algorithm}
  \caption{Greedy peeling algorithm for adaptive whole storage placement of C copies of a single video m}
  \label{alg:adaptivewholeyalg}
\begin{algorithmic}[1]
  \WHILE {Not all $C$ copies of video $m$ are placed in the caches}
  \STATE - Find $h_{max}$ with the most connected peers requesting video $m$
  \STATE - Store a copy of video $m$ in cache $h_{max}$
  \STATE - Remove all peers connected to $h_{max}$ requesting video $m$ from the cache system
  \ENDWHILE
  \end{algorithmic}
\end{algorithm}

While not optimal, the greedy peeling algorithm provides a lower bound on the number of peers connected to $H_m$ for any graph. Therefore, the statistical average of total download rate provided by sets of caches of cardinality $C$ chosen by the greedy peeling algorithm over random graphs is a lower bound on the expected total download rate provided by the cache system without help from server.

The performance of the adaptive whole storage placement policy is plotted in \autoref{fig:Adaptiveatomic_example} for a video $m$ requested by 20 peers (i.e. $|U_m|=20$), where the upper bound is obtained from Proposition \ref{pro:propositionupbd} and the lower bound is obtained from \autoref{alg:adaptivewholeyalg}. The bounds on the total download rate provided by the caches are plotted versus the number of copies of the video stored in the cache system. Note that the upper bound is obtained analytically and the lower bound is obtained as an average over random graphs computed by simulation. It can be seen that only a small portion of the caches need to store a copy of the video in order for every peer to be served by the cache system without help from the server. The marginal performance gain decreases as more copies are added because some caches have more peers connected than others and once a peer is covered by one copy it is not counted again for additional copies.

\begin{figure}[htb]
  \begin{center}
 \includegraphics[width=0.53\textwidth,height=0.25\textheight, clip]{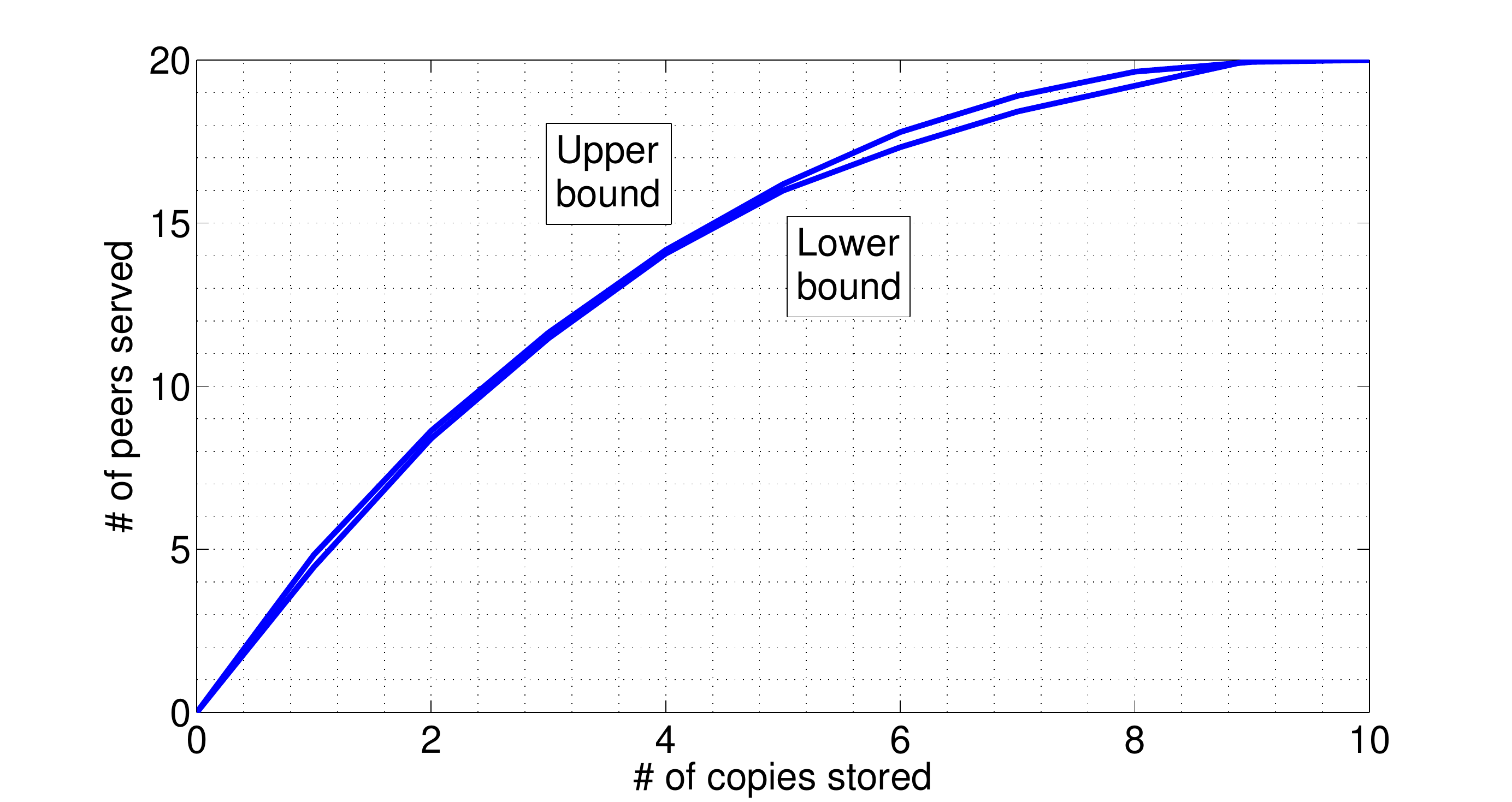}
    \caption{Single video performance of a cache system of 50 caches, 20 peers and four random connections per peer}
    \label{fig:Adaptiveatomic_example}
  \end{center}
\vspace{-0.2in}
\end{figure}

\subsection{Adaptive Fractional Storage Placement Policy}\label{sec:adapfrac}
For adaptive fractional storage placement policies, caches' video catalogs are also selected in response to the actual demand. Given the number (possibly non-integer) of video copies $C_m$ to be placed in the cache system, each cache stores some fraction of video $m$, $W_{hm}$. The values ($W_{hm}: h\in H, m\in M$) are chosen to maximize the total download rate provided by the cache system without help from the server. Since adaptive fractional storage placement relaxes the integer constraint in (\ref{eqn:optimizationproblem}), we can solve the convex optimization problem exactly. A primal-dual algorithm can be applied for solving the convex optimization problem. We first set the download rate of each cache-peer connection and the storage of video $m$ in each cache as primal variables. We then set the price of download rate of each cache-peer connection subject to the availability of video $m$ in the cache and the price of total cache storage of video $m$ subject to a global cache storage constraint as dual variables. Thus the cache system's fractional storage placement, routing, and upload rates converges optimally by the primal-dual algorithm, shown as \autoref{alg:adaptivefractionalalg}, which is a striped down variation of the algorithm in \cite{HZhang}. The notation ``$[a]^+_{x_{hu}}$" on the right-hand-side of \eqref{eqn:adaptivefractionaldownloadrate} denotes $a$ if $x_{hu}>0$ and $\max{\{a,0\}}$ if $x_{hu}\leq 0$.

\begin{algorithm}
  \caption{Primal-dual algorithm for adaptive fractional storage placement of $C$ copies of a single video $m$}
  \label{alg:adaptivefractionalalg}
\begin{algorithmic}
  \STATE Primal 1: update the download rates\\
    \begin{eqnarray} \label{eqn:adaptivefractionaldownloadrate} %
	\dot{x}_{hu}=\left[\delta_{hu}\cdot\left(\mathds{1}{\left\{\sum_{h'\in N^g_u} x_{h'u}<1\right\}}-\lambda_{hu}\right)\right]^+_{x_{hu}}
    \end{eqnarray}
    where $\delta_{hu}>0$ is an adaptation parameter
  \STATE Dual 1: update the price of download rates
    \begin{eqnarray} \label{eqn:adaptivefractionaldownloadprice} %
	\dot{\lambda}_{hu}=\left[\kappa_{hu}\cdot\left(x_{hu}-W_{h}\right)\right]^+_{\lambda_{hu}}
    \end{eqnarray}
    where $\kappa_{hu}>0$
  \STATE Primal 2: update the fractions of the video stored
    \begin{eqnarray} \label{eqn:adaptivefractionalstorage} %
	\dot{W}_{h}=\left[\iota_{h}\cdot\left(\left(\sum_{u:h\in N^g_u}\lambda_{hu}\right)-\omega\right)\right]^+_{W_{h}}
    \end{eqnarray}
    where $\iota_{h}>0$
  \STATE Dual 2: update the price of cache storage
    \begin{eqnarray} \label{eqn:adaptivefractionalstorage} %
	\dot{\omega}=\left[\nu\cdot\left(\sum_{h\in H}W_{h}-C\right)\right]^+_{\omega}
    \end{eqnarray}
    where $\nu_{h}>0$
  \end{algorithmic}
\end{algorithm}

The performance of the adaptive fractional storage placement policy is plotted in \autoref{fig:Adaptivefluid_example} for a video $m$ requested by 20 peers (i.e. $|U_m|=20$). The total download rate provided by the caches is plotted versus the number of copies of the video stored in the cache system. Only a small portion of the caches need to store a copy of the video in order for every peer to be served by the cache system without help from the server. The concavity in the non-decreasing marginal performance gain follows from the random connections which result in a non-symmetric graph and the fact a user has no use for download rate in excess of one. The slower decrease in the marginal performance gain of the adaptive fractional storage placement policy compared to the adaptive whole storage placement policy resulted from the relaxation of the integer storage placement constraint.

\begin{figure}[htb]
  \begin{center}
 \includegraphics[width=0.53\textwidth,height=0.25\textheight, clip]{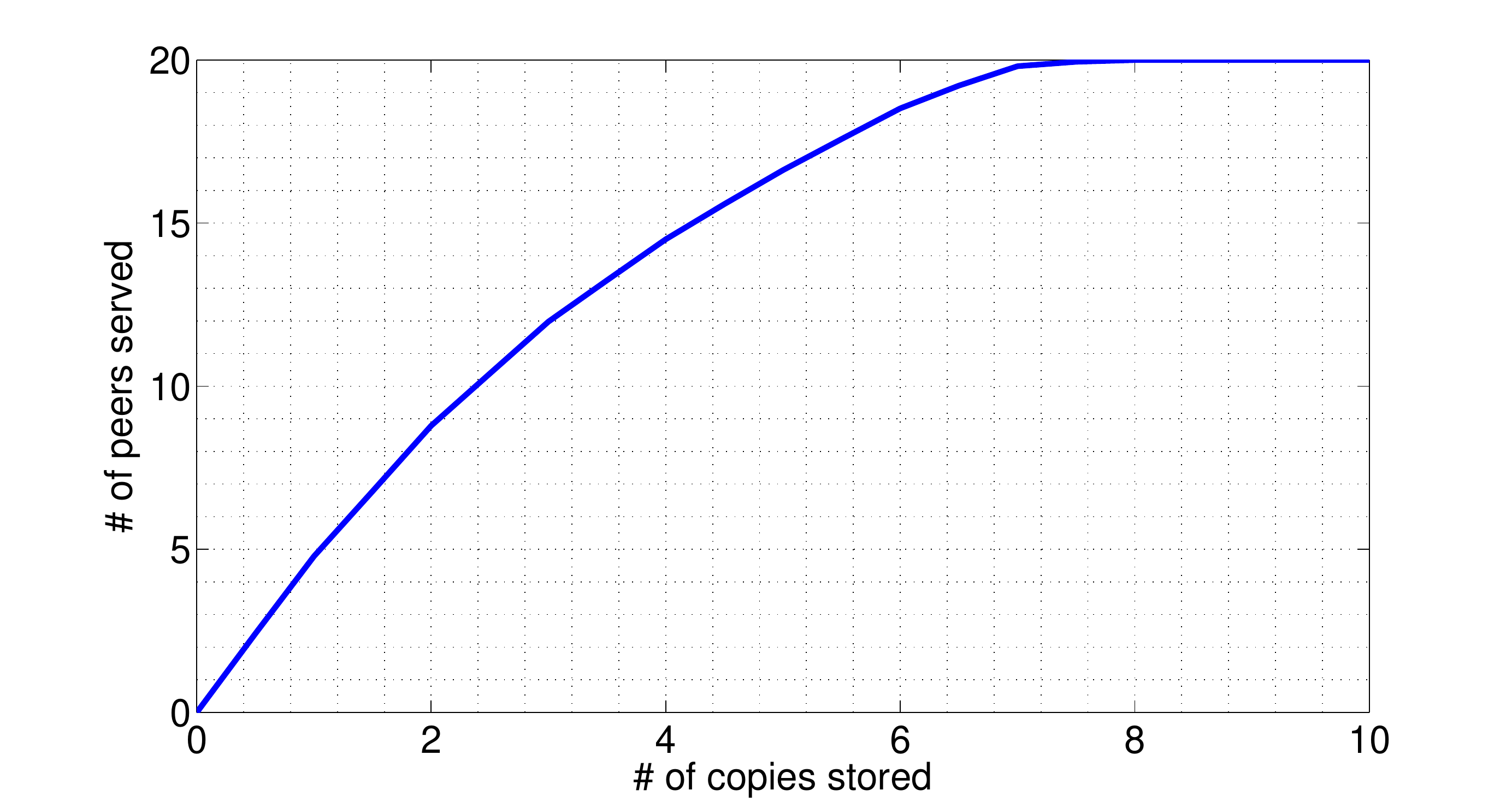}
    \caption{Single video performance of a cache system of 50 caches, 20 peers and four random connections per peer}
    \label{fig:Adaptivefluid_example}
  \end{center}
\vspace{-0.2in}
\end{figure}

\section{\bf Single Video Placement Policy Comparisons}\label{sec:singlevideocomparison}

This section explores the trade-offs between performance and practicality of the placement policies for a single video discussed in \autoref{sec:singlevideoplacement}. To look for potential patterns in single video placement performances, we choose 20, 100, and 2000 peers representing three different popularity levels, where each peer is connected to four random caches out of the 50 caches selected uniformly. Videos with 20 or fewer peer requests represent the set of videos with below average popularity. A video with 100 peer requests represents a video of above average popularity. A video with 2000 peer requests represents the most demanded video. The performance of single video placement policies derived from the analysis in \autoref{sec:singlevideoplacement} are shown in Figures \ref{fig:Singlevideo_example1} - \ref{fig:Singlevideo_example4}. The fraction of total download rate provided by the caches is plotted versus the number of copies of the video stored in the cache system. The abbreviation FW, FF, AW, and AF represents the fixed whole, fixed fractional, adaptive whole, and adaptive fractional storage placement policy, respectively.

\begin{figure}[htb]
  \begin{center}
 \includegraphics[width=0.53\textwidth,height=0.25\textheight, clip]{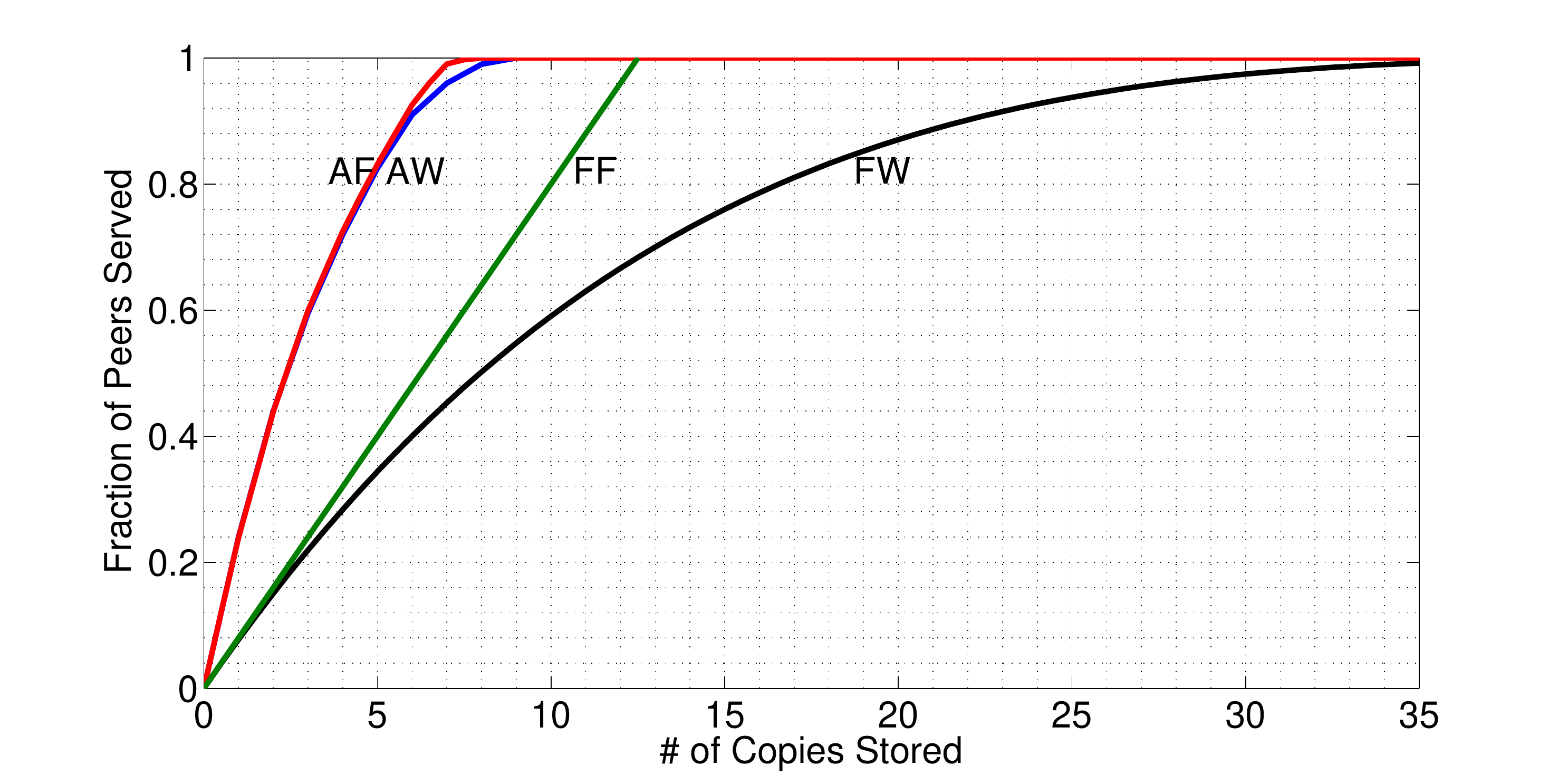}
    \caption{Single video performance of a cache system of 50 caches, 20 peers and four random connections per peer}
    \label{fig:Singlevideo_example1}
  \end{center}
\vspace{-0.2in}
  \begin{center}
 \includegraphics[width=0.53\textwidth,height=0.25\textheight, clip]{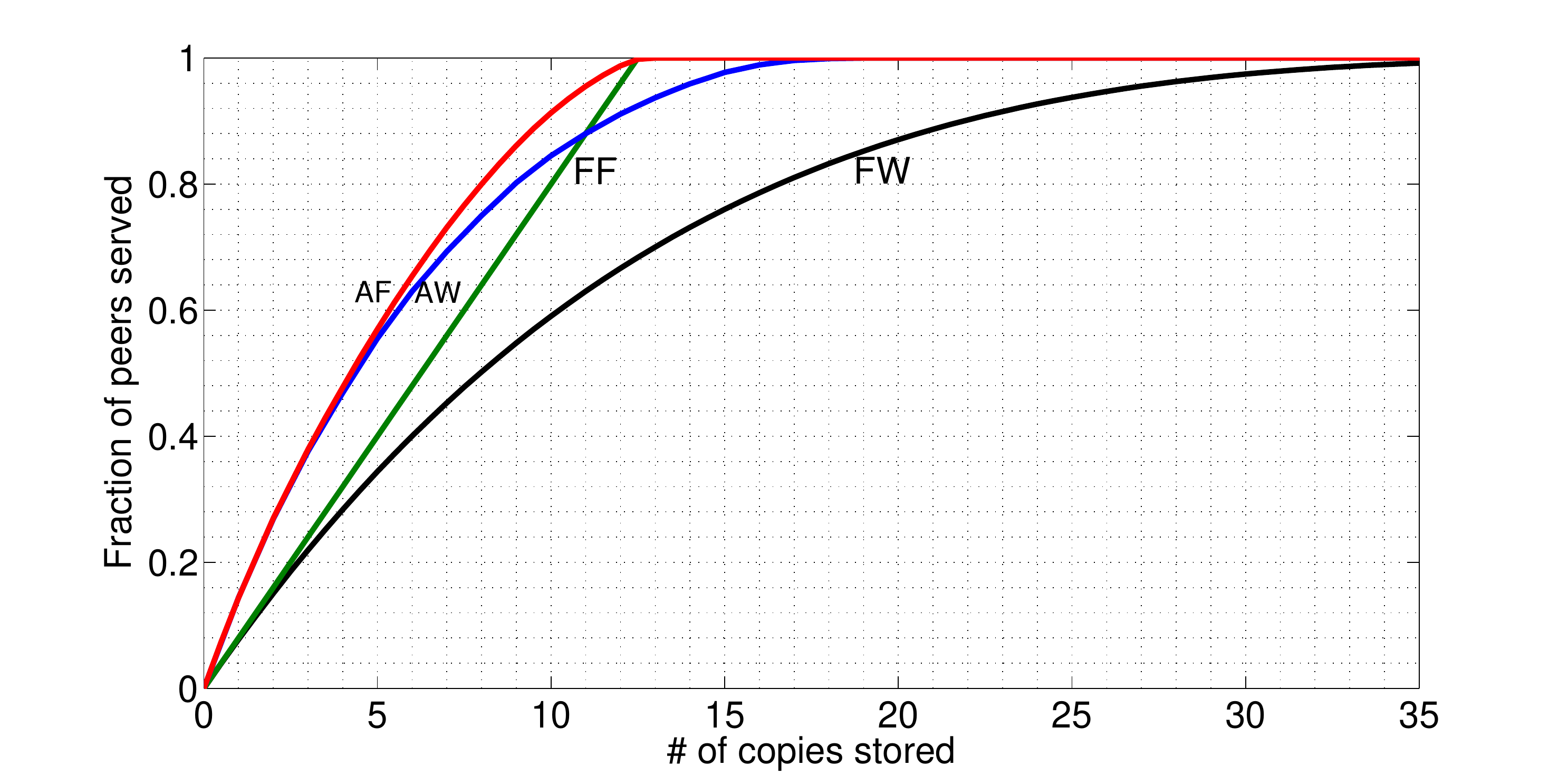}
    \caption{Single video performance of a cache system of 50 caches, 100 peers and four random connections per peer}
    \label{fig:Singlevideo_example3}
  \end{center}
\vspace{-0.2in}
\end{figure}
\begin{figure}[htb]
  \begin{center}
 \includegraphics[width=0.53\textwidth,height=0.25\textheight, clip]{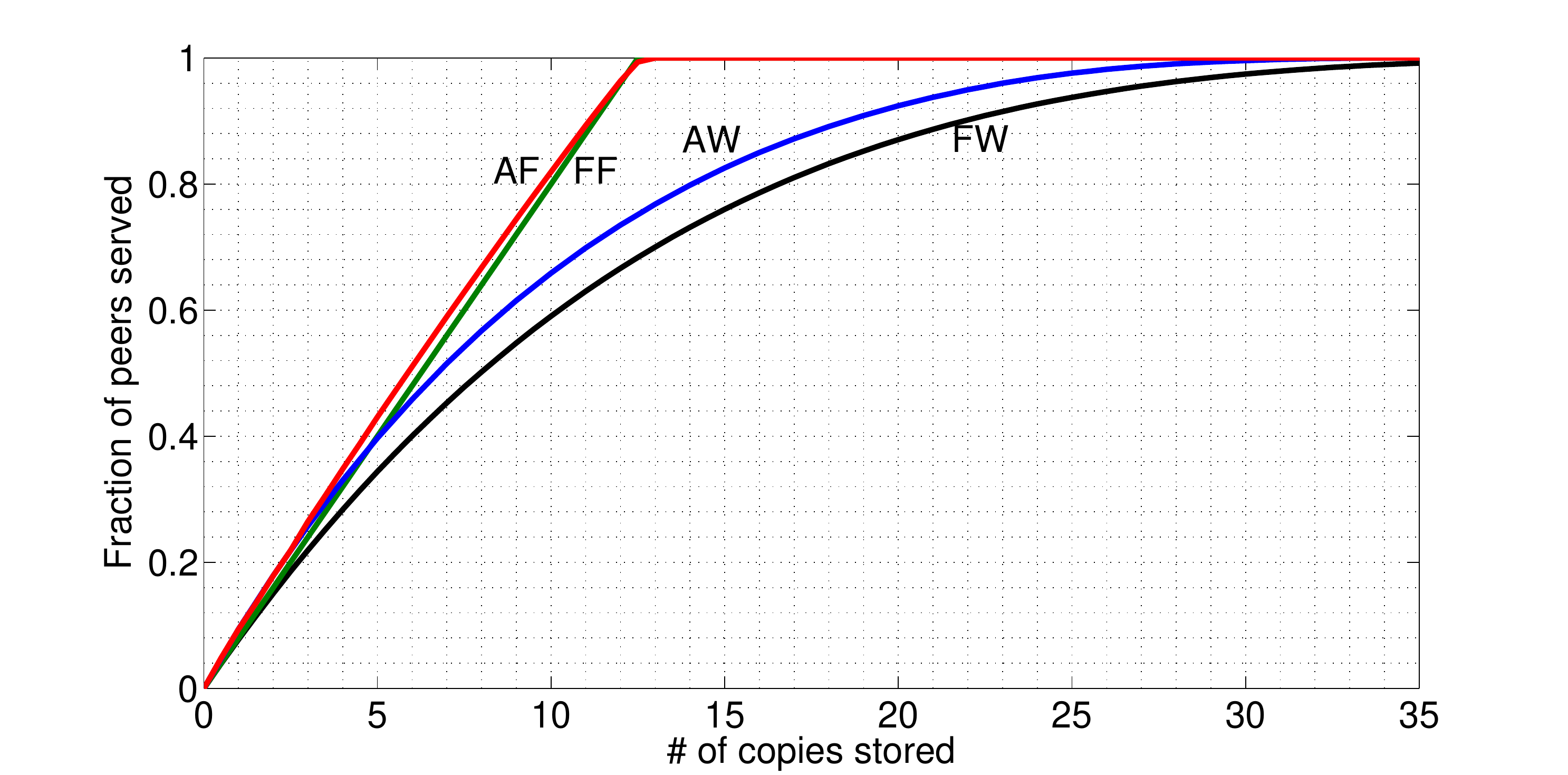}
    \caption{Single video performance of a cache system of 50 caches, 2000 peers and four random connections per peer}
    \label{fig:Singlevideo_example4}
  \end{center}
\vspace{-0.2in}
\end{figure}

One observation from these three plots is that the single video placement service curves of fixed whole and fixed fractional storage placement policies remain the same across videos of all popularity. This is true because the non-adaptive property of these two policies yields the same expected performance over random graphs. 

A second observation is that the fixed whole storage placement policy serves as the performance lower bound and the adaptive fractional storage placement policy serves as the performance upper bound on all four single video placement policies. This fact is true for videos of all popularity because of the integer constraint of whole storage placement policies and the non-adaptive property of fixed placement policies. A consequence of this fact is that the service curve of adaptive whole storage placement policy is always above that of the fixed whole storage placement policy, and the service curve of adaptive fractional storage placement policy is always above that of the fixed fractional storage placement policy. The adaptive whole storage and fixed fractional placement policies are the only pair of curves that can cross.

A third observation is that the service curves of adaptive whole and adaptive fractional storage placement policies drop down and converge to those of their corresponding fixed placement policies as videos become more popular. This is due to the fact that as more peers request a video, the LLN implies each cache is approximately connected to the same number of peers. As a result, the adaptive property of placement policies becomes less beneficial, which means the placement of copies of a video into any random set of caches yields approximately the same performance.

A fourth observation is that the service curve of the adaptive whole storage placement policy rises up and converges to that of the adaptive fractional storage placement policy as videos become less popular. This is due to the fact that as fewer peers request a video, there will be more imbalance in the cache-peer connections due to randomness. As a result, the integer constraint relaxation of the adaptive fractional storage placement policy becomes less imposing, which means the adaptive placement of whole copies of a video yields approximately the same performance.

\section{\bf Multiple Video Placement Policy}\label{sec:multivideopolicy}
In this section, we first describe a general method to apply the algorithms and analysis developed for single video placement to placement of multiple videos. The general method optimally allocates the numbers of copies of the videos by maximizing the total download rate provided by the cache system without help from the server. Then, we examine how the general method works for each of the four storage/placement methods discussed in \autoref{sec:singlevideocomparison}. Finally, motivated by observations from single video comparisons, we introduce a hybrid storage multiple video placement policy. 

The system model remains the same as the one described in \autoref{sec:model}. Given the storage capacity of each cache, the goal is to maximize the total download rate for all users from the caches. Since we are considering the placement of copies of different videos in the multiple video scenario, to simplify the problem, we will consider a single combined total cache storage constraint first. Then, we will add the uniform per-cache storage constraint and comment on the difference.

\subsection{General Algorithm for Multiple Video Placement Policy}
Our general method for using a policy for single video placement for the problem of multiple video placement is summarized as \autoref{alg:greedywhole}. It iteratively stores copies of videos with the highest marginal performance gain subject to the storage constraint. This method maximizes the total expected download rate provided by the cache system without help from the server. This is because placing copies of one video will not affect the marginal performance gains of other videos and the marginal performance gain of a video depends only on the numbers of copies already placed for the video. We will apply this general method to all multiple video placement policies in this section to obtain the optimal video placement subject to some storage constraint.

\begin{algorithm}
  \caption{General algorithm for multiple video placement}
  \label{alg:greedywhole}
\begin{algorithmic}[1]
  \STATE Perform single video analysis on each video to obtain the set of placement service curves for all videos
  \WHILE {Storage constraint not violated}
  \STATE - Find a video with the highest marginal performance gain given copies of videos previously stored
  \STATE - Store a copy of the video in the cache position obtained from the single video analysis
  \ENDWHILE
  \end{algorithmic}
\end{algorithm}

\subsection{Fixed Whole Storage Multiple Video Placement Policy}
Recall from the single video analysis for fixed whole storage placement policy that $\alpha(m)$ is the probability video $m$ is present in a typical cache. In this context, \autoref{alg:greedywhole} reduces to finding the $\alpha(m)$ to minimize the miss probability. 

This way of selecting $\alpha(m)$ is nonadaptive in the sense that the particular set of caches in which the video is placed is independent of which caches the peers are connected to, but adaptive in the sense it relies on the popularity distribution. Recall the set of videos, $M$, follows a decreasing Zipf popularity distribution, $p(m)$ for each $m\in M$, and $p_{miss}(m)$ is the probability a peer is not connected to a cache storing video $m$, given by $p_{miss}(m)= \frac{\binom{|H|-C_m}{L}}{\binom{|H|}{L}}$.

Now consider multiple videos. Proposition \ref{pro:pmiss} yields the following upper bound on $p_{miss}$, the average probability a peer is not connected to a cache storing the requested video:
\begin{eqnarray} \label{eqn:alphamin} %
	p_{miss} \stackrel{\triangle}{=}  \sum_{m=1}^{|M|}  p(m)\cdot p_{miss}(m) \leq  \sum_{m=1}^{|M|}  p(m)\cdot (1-\alpha(m))^L
\end{eqnarray}
Using the upper bound on $p_{miss}$, \autoref{alg:greedywhole} can be carried out analytically. Suppose each cache can store $K$ copies of videos. To minimize the upper bound on $p_{miss},$  we select $(\alpha(m): m \in M )$ to minimize $\sum_{m\in M} p(m)(1-\alpha(m))^L$ subject to the constraint $\sum_{m=1}^{|M|} \alpha(m) = K.$   This convex optimization problem
can be solved using the Karush-Kuhn-Tucker conditions with a Lagrange multiplier for the sum constraint, yielding:
\begin{eqnarray} \label{eqn:alphamin} %
	\alpha(m) &= & \left(  1 -   \frac{c}{(p(m))^{\frac{1}{L-1}}}\right)_+
\end{eqnarray}
where c is chosen so
\begin{eqnarray}
	\nonumber  &\ &\sum_{m=1}^{|M|} \alpha(m) = K\mbox{.}
\end{eqnarray}

A binary bisection algorithm can be used to quickly find $c$ numerically. As a result, given each cache can store $K$ copies of videos, (\ref{eqn:alphamin}) gives the optimal probability any video is present in a typical cache. Any fixed whole storage placement for multiple videos with empirical probability of each video $m$ present in the caches equal to $\alpha(m)$ would serve the maximum expected number of peers for the fixed whole storage placement policy.

A slight additional step can be used to ensure that the total number of videos placed is exactly $K\cdot |H|$. For each video $m$, let $\theta_m$ be the fractional part of $\alpha_m\cdot|H|$. Then, we can take for each video $m$, $C_m=\lceil\alpha_m\cdot|H|\rceil$ with probability $\theta_m$ and $C_m=\lfloor\alpha_m\cdot|H|\rfloor$ with probability $1-\theta_m$, in such a way that $\sum_m C_m \equiv k\cdot |H|$ with probability one, by letting the vector $(C_m:m\in M)$ depend on a uniformly distributed random variable $U$. The value of $p_{miss}$ above can be obtained by averaging over $U$, or $p_{miss}$ no larger than above can be achieved by minimizing $p_{miss}$ with respect to $U$. For fixed whole placement, $p_{miss}$ does not depend on which of the $C_m$ caches video $m$ is placed, for each $m$.

The same performance is obtained when the uniform per-cache storage constraint is replaced by the total storage constraint. This is because the cache-peer connections are random, so the performance does not depend on which cache a video is stored in. 

\subsection{Fixed Fractional Storage Multiple Video Placement Policy}
Recall from the single video analysis for fixed fractional storage placement policy that each cache stores a fractional copy of uniform size, so the marginal performance gain is constant for the placement of any given video up until all caches have a fraction $\frac{1}{L}$ of the videos, and is proportional to the popularity distribution. Applying \autoref{alg:greedywhole}, given each cache can store $K$ copies of videos and each peer is connected to $L$ distinct caches, the placement of a fraction $\frac{1}{L}$ of each of the $K\cdot L$ most popular videos in each cache would serve the maximum number of peers for the fixed fractional storage placement policy. 

The uniform per-cache storage constraint is equivalent to the total cache storage constraint because of the uniform size of fractional copies stored in each cache. 

\subsection{Adaptive Whole Storage Multiple Video Placement Policy}
We see from the single video analysis for the adaptive whole storage placement policy that finding $H_m$ caches that are connected to the maximum number of peers requesting video $m$ is a set cover problem which may require exhaustive search to solve, and the placement problem would be more difficult for multiple videos. Therefore, it is preferable to extend the single video greedy peeling algorithm, \autoref{alg:adaptivewholeyalg}, to a multiple video greedy algorithm, which is in the exact form of \autoref{alg:greedywhole}.

Approximately the same performance can be obtained when the total storage constraint is replaced by the uniform per-cache storage constraint for the following reason. During execution of \autoref{alg:greedywhole} for adaptive whole storage placement, the number of peers served per copy placed decreases monotonically. The number is large for the initial placements because many peers requesting a popular video are connected to the same cache. In the late stages of the algorithm, the numbers of peers served per copy placed typically drops to one or two peers, and there are many possibilities of which cache to use to serve the same maximum number of additional peers. That is because of the large number of low popularity videos. Consequently, the choices of which caches to use in the late stages of \autoref{alg:greedywhole} can be made to balance the individual cache loads, so that individual constraints on the caches are no more restrictive than a single constraint on the total number of copies stored.

\subsection{Adaptive Fractional Storage Multiple Video Placement Policy}
The primal-dual algorithm described in \autoref{sec:adapfrac}, \autoref{alg:adaptivefractionalalg}, has a total cache storage constraint and obtains the optimal placement for a given video $m$. The algorithm can be extended to obtain an optimal placement for multiple videos by combining the primal and dual variables for all videos. For each video $m$, the summation of prices of download rates in step Primal 2 represents the total demand of users for video $m$ in cache $h$. This is similar to finding the marginal performance gain in \autoref{alg:adaptivewholeyalg} and increases the storage of video $m$ in proportion to the marginal performance gain. 

The resulting algorithm differs from the optimal placement algorithm in \cite{HZhang} only by the type of cache storage constraint; the primal-dual algorithm in \cite{HZhang} has a per-cache storage constraint and the primal dual algorithm described above has a total cache storage constraint.  Approximately the same performance can be obtained for the total storage constraint and the uniform per-cache storage constraint, for the same reason as for the adaptive whole storage placement policy. The uniform per-cache and the total cache storage constraints can both be satisfied directly through the primal-dual algorithm by changing the dual variable on cache storages. 

\subsection{Hybrid Multiple Video Placement Policy}
Finally, we construct a hybrid storage placement policy for the whole system with multiple videos. As noted above, the adaptive fractional storage placement policy serves at least as many peers as the other three policies. However, adaptive fractional storage suffers overhead due to the need to encode and decode to provide fractional storage and to adapt to current demand. Hence, we seek another policy serving nearly as many peers, but with less overhead. As observed in \autoref{sec:singlevideocomparison}, for videos with low popularity, the adaptive whole storage placement policy serves nearly as many peers as the adaptive fractional storage placement policy. And for popular videos, the fixed fractional policy serves nearly as many peers as the adaptive fractional storage placement policy. This suggests using a hybrid policy that follows the adaptive whole placement policy for videos with low popularity and the fixed fractional placement policy for popular videos.

From the single video analysis, we have obtained the single video service curves for each video $m\in M$, denoted as $f_{m,0}(C_m)$ and $f_{m,1}(C_m)$ for the fixed fractional and adaptive whole storage placement policies, respectively, which are concave functions of the number of copies stored, $C_m$. For each video $m$, the hybrid storage placement policy chooses to apply one of the two placement policies, and the number of copies $C_m$ to store. The choices are made to maximize the total number of peers served subject to the total storage constraint, $S$. The optimization problem can be stated as follows:
\begin{eqnarray} \label{eqn:hybridoptimizationproblem} %
   	& \max & U({\bf C}, \boldsymbol{\theta})\stackrel{\triangle}{=} \sum_{m\in M} f_{m,\theta (m)}(C_m) \\
	\nonumber & w.r.t. & C_m \mbox{ and } \theta (m),\  m\in M\\
	\nonumber & s.t. &  \sum_{m\in M} C_m \leq S \mbox{ and } \theta (m) \in \{0,1\}\mbox{,}
\end{eqnarray}
where for each video $m$, $\theta(m)$ indicates which placement policy is used.

For $C_m$ given, the choice of $\theta(m)$ is clearly the value $\theta\in \{0,1\}$ that maximizes $f_{m,\theta}(C_m)$. Thus, with $f_m\stackrel{\triangle}{=}\max\left\{f_{m,0}(C_m),f_{m,1}(C_m)\right\}$, \eqref{eqn:hybridoptimizationproblem} is equivalent to the following:
\begin{eqnarray} \label{eqn:hybridoptimizationproblem2} %
   	& \max & U({\bf C})\stackrel{\triangle}{=} \sum_{m\in M} f_{m}(C_m) \\
	\nonumber & w.r.t. & C_m, \  m\in M\\
	\nonumber & s.t. &  \sum_{m\in M} C_m \leq S
\end{eqnarray}

However, \eqref{eqn:hybridoptimizationproblem2} is not a concave optimization problem, because the functions $f_m$ are not concave. As a result, the optimization problem can have local maxima that are not global maxima, making exact solution difficult. To address this problem, we use the concave hull of the objective function, to arrive at the following new optimization problem: 
\begin{eqnarray} \label{eqn:hybridoptimizationproblem3} %
   	& \max & \overline{U}({\bf C})\stackrel{\triangle}{=} \sum_{m\in M} \overline{f}_{m}(C_m) \\
	\nonumber & w.r.t. & C_m, \  m\in M\\
	\nonumber & s.t. &  \sum_{m\in M} C_m \leq S
\end{eqnarray}
where for each video $m$, $\overline{f}_m$ denotes the concave hull of $f_m$.

Let $V^*$ denote the optimal value for \eqref{eqn:hybridoptimizationproblem2} and $\overline{V}$ denote the optimal value for \eqref{eqn:hybridoptimizationproblem3}. We find that for centralized allocation, the effect of the change in the objective function is small. Service curves for fixed fractional and adaptive whole storage policies, $f_{m,0}$ and $f_{m,1}$, for a video $m$ requested by 200 peers (i.e. $|U_m|=200$) are shown as the two solid curves in \autoref{fig:concaveenvelope}. The expected total download rate provided by the caches is plotted versus the number of copies of the video stored in the cache system. The maximum of the two solid curves is $f_m$ and the dotted curve on top is its concave hull, $\overline{f}_m$.

\begin{figure}[htb]
  \begin{center}
 \includegraphics[width=0.53\textwidth,height=0.25\textheight, clip]{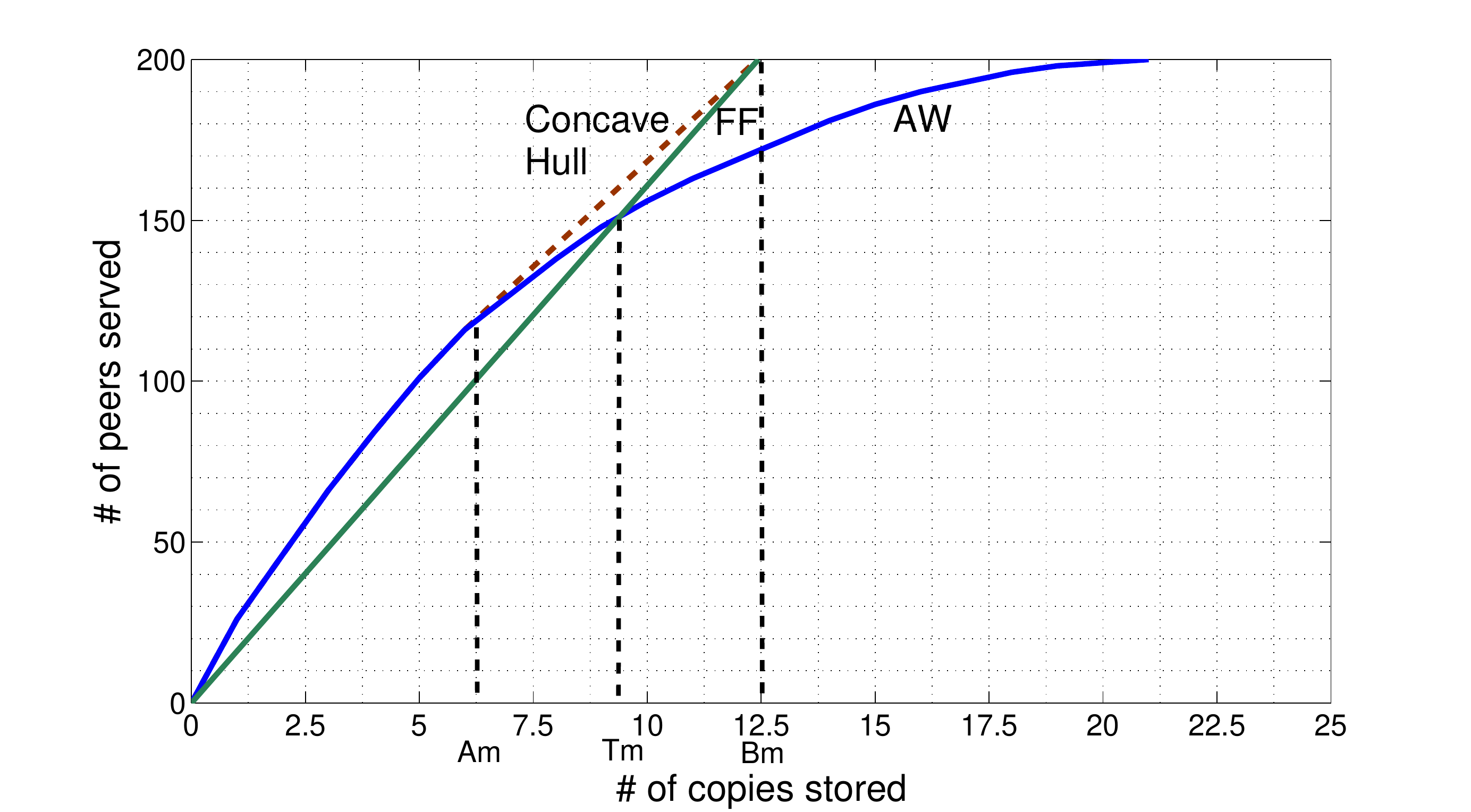}
    \caption{Single video performance of a cache system of 50 caches, 200 peers and four random connections per peer}
    \label{fig:concaveenvelope}
  \end{center}
\vspace{-0.2in}
\end{figure}

There can be at most one crossover between $f_{m,0}$ and $f_{m,1}$ because $f_{m,0}$ is concave and $f_{m,1}$ is linear up to the point all peers requesting video $m$ are served. We observe from \autoref{fig:concaveenvelope} that if a crossover occurs, the concave hull, $\overline{f}_m~$, is linear over an interval $[A_m, B_m]$ and the crossover point $T_m$ falls within the interval. Since $\overline{f}_m$ is a concave upper bound to $f_m~$, $\overline{V}$ is an upper bound on $V^*$ and the solution to \eqref{eqn:hybridoptimizationproblem3} can be obtained by applying \autoref{alg:greedywhole} with a tie breaking rule. The resulting algorithm is denoted as \autoref{alg:hybridalg}.

\begin{algorithm}
  \caption{Hybrid algorithm for multiple video placement with concave hull}
  \label{alg:hybridalg}
\begin{algorithmic}[1]
  \STATE Perform single video fixed fractional and adaptive whole storage placement analysis to obtain the set of placement service curves for all videos, i.e. $f_{m,0}$ and $f_{m,1}$ for all $m\in M$
  \STATE Take the maximum of each pair of service curves to obtain $f_m$
  \STATE Take the concave hull of $f_m$ for each video to obtain $\overline{f}_m$
  \WHILE {storage constraint not violated}
  \STATE - Find the set $V$ of videos $m$ maximizing the marginal performance gain $\overline{f}_m(C_m+1)-\overline{f}_m(C_m)$
  \IF{the video stored in the previous iteration is in $V$} 
  \STATE - Store another copy of that same video
  \ELSE
  \STATE - Store any video in $V$
  \ENDIF
  \ENDWHILE
  \end{algorithmic}
\end{algorithm}

\autoref{alg:hybridalg} determines the number of copies, $\overline{C}_m$ of each video $m$ to store yielding the solution $\overline{\bf C}$ to \eqref{eqn:hybridoptimizationproblem3}. Based on $\bf \overline{C}$ and the definition of $f_m~$, we can find which placement policy to use for each video $m$: $\overline{\theta}(m)=\arg\max_{\theta\in\{0,1\}}\left\{f_{m,\theta}(\overline{C}_m)\right\}$. Thus, $V^*$ can be approximated by $U({\bf \overline{C}}, \overline{\boldsymbol{\theta}})$. We show below that the extent of suboptimality of the placement is small and does not grow with the number of caches or the number of videos. Let $m^*$ denote the video selected in the final iteration of \autoref{alg:hybridalg}, before the storage constraint is met. Define $\Delta(m)\stackrel{\triangle}{=}\max_{C}\left\{\overline{f}_m(C)-f_m(C)\right\}$, which is the maximum difference between $f_m$ and its concave hull, which occurs at $T_m$ if crossover occurs. Note that often $\Delta(m)=0$ for $m$ sufficiently large because if there are not many peers requesting video $m$ then more peers are served by adaptive whole placement for any number of copies $C_m$.

\begin{pro}\label{pro:concaveupperbound} The extent of suboptimality of the placement found by \autoref{alg:hybridalg} is bounded as follows:\\$V^*-U({\bf \overline{C}}, \overline{\boldsymbol{\theta}})\leq\Delta(m^*)$. \end{pro}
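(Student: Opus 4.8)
The plan is to sandwich the value $U({\bf \overline{C}}, \overline{\boldsymbol{\theta}})$ actually achieved between the two optimal values $V^*$ and $\overline{V}$, and then to show that the slack between the bounds collapses to the error of a single video. First I would record how the three quantities relate. Since the concave hull satisfies $\overline{f}_m \geq f_m$ pointwise, it was already noted that $V^* \leq \overline{V}$; moreover $\overline{\bf C}$ is feasible for \eqref{eqn:hybridoptimizationproblem2}, and because \autoref{alg:hybridalg} solves \eqref{eqn:hybridoptimizationproblem3} exactly we have $\overline{V} = \sum_{m\in M}\overline{f}_m(\overline{C}_m)$. Finally, the definition $\overline{\theta}(m)=\arg\max_{\theta\in\{0,1\}}\left\{f_{m,\theta}(\overline{C}_m)\right\}$ together with $f_m=\max\{f_{m,0},f_{m,1}\}$ gives $U({\bf \overline{C}}, \overline{\boldsymbol{\theta}})=\sum_{m\in M} f_m(\overline{C}_m)$. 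Combining these,
\begin{eqnarray*}
V^* - U({\bf \overline{C}}, \overline{\boldsymbol{\theta}}) & \leq & \sum_{m\in M}\overline{f}_m(\overline{C}_m) - \sum_{m\in M} f_m(\overline{C}_m) \\
& = & \sum_{m\in M}\left(\overline{f}_m(\overline{C}_m) - f_m(\overline{C}_m)\right),
\end{eqnarray*}
so it remains to show that every summand vanishes except possibly the one indexed by $m^*$.

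The key step is to prove that for each $m\neq m^*$ the final count $\overline{C}_m$ lands at a point where the hull touches $f_m$, i.e. $\overline{f}_m(\overline{C}_m)=f_m(\overline{C}_m)$. Here I would use the fact established above the proposition that $f_m$ has at most one crossover, so $\overline{f}_m$ differs from $f_m$ only on a single interval $[A_m,B_m]$ on which it is linear, while agreeing with $f_m$ outside that interval and at its endpoints. On $[A_m,B_m]$ the marginal gain $\overline{f}_m(C+1)-\overline{f}_m(C)$ equals a constant $g_m$, and by concavity the marginal gains are nonincreasing in $C$ everywhere. The heart of the argument is that the sticky tie-breaking rule forces the greedy to cross this linear stretch in one uninterrupted run: when the first copy raising $m$'s count into the interval is placed, $g_m$ is the current maximum marginal gain, so every other video's next marginal gain is $\leq g_m$; placing further copies of $m$ leaves all other curves untouched and keeps $m$'s own marginal gain equal to $g_m$, so $m$ remains in the maximizing set $V$ and, being the previously stored video, is selected again, repeatedly, until its count reaches $B_m$.

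Consequently a video can be left with a count strictly inside $(A_m,B_m)$ only if the storage budget is exhausted in the middle of its run, which can happen for just one video, namely the one $m^*$ chosen in the final iteration. Every other video is either never entered ($\overline{C}_m\leq A_m$) or run to completion ($\overline{C}_m\geq B_m$), and in both cases $\overline{f}_m(\overline{C}_m)=f_m(\overline{C}_m)$, so its summand is zero. The displayed sum therefore reduces to its $m^*$ term, and bounding that term by its maximum gives
\begin{eqnarray*}
V^* - U({\bf \overline{C}}, \overline{\boldsymbol{\theta}}) & \leq & \overline{f}_{m^*}(\overline{C}_{m^*}) - f_{m^*}(\overline{C}_{m^*}) \\
& \leq & \Delta(m^*).
\end{eqnarray*}
The main obstacle is this greedy-trajectory argument: I must verify rigorously that the tie-breaking rule never lets the algorithm abandon a video partway through its linear bridging interval, since this is exactly what guarantees that all videos but the last finish at a touch point. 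The two sandwiching inequalities and the final reduction to a single term are then routine.
\qed
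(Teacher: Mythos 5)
Your proof is correct and follows essentially the same route as the paper's: both bound $V^*$ by $\overline{V}=\sum_m \overline{f}_m(\overline{C}_m)$ and then show that $\overline{f}_m(\overline{C}_m)=f_m(\overline{C}_m)$ for every $m\neq m^*$, leaving only the $\Delta(m^*)$ slack from the final video. The only difference is that you spell out the greedy-trajectory/tie-breaking argument (constant marginal gain $g_m$ across the linear stretch $[A_m,B_m]$, so the sticky rule runs each video through its bridging interval without interruption) which the paper simply asserts as a consequence of the tie-breaking rule; that added rigor strengthens rather than changes the argument.
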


\begin{proof}
The solution $\overline{\bf C}$ to \eqref{eqn:hybridoptimizationproblem3} produced by \autoref{alg:hybridalg} has the following property, due to the rule of breaking ties in favor of the video stored in the previous iteration: For any video $m$ with $m\neq m^*$, $\overline{f}_m(\overline{C}_m)=f_m(\overline{C}_m)$, because either there is no crossover, so $\overline{f}_m\equiv f_m$, or there is crossover, but $\overline{C}_m\in[0,A_m]\cup [B_m]$ and $\overline{f}_m = f_m$ on $[0,A_m]\cup[B_m]$. Then we have:
\begin{eqnarray} \label{eqn:conditionalequality} %
   	 f_m(\overline{C}_m)
    \begin{cases}
      = \overline{f}_m(\overline{C}_m), & \mbox{if}\ m\neq m^* \\
      \geq \overline{f}_{m^*}(\overline{C}_{m^*})-\Delta(m^*), & \mbox{if}\ m= m^*
    \end{cases}
\end{eqnarray}
which gives the following inequality by summing over $m$:
\begin{eqnarray} \label{eqn:inequality1}
U({\bf \overline{C}}, \overline{\boldsymbol{\theta}})\geq\overline{U}({\bf \overline{C}}, \overline{\boldsymbol{\theta}})-\Delta(m^*)
\end{eqnarray}
Since the greedy algorithm is optimal for the concave problem \eqref{eqn:hybridoptimizationproblem3} and $\overline{f}_m\ge f_m$ for all $m$, $\overline{U}({\bf \overline{C}}, \overline{\boldsymbol{\theta}})=\overline{V}\geq V^*$. So \eqref{eqn:inequality1} yields the proposition.
\qed
\end{proof}

In the case of a large system presented in the simulation in the next section, $m^*$ is the $400^{th}$ video and the suboptimality gap $\Delta(m^*)=0$ because the adaptive whole storage placement policy performs strictly better than the fixed fractional storage placement policy for the last video, $m^*$, placed. Therefore, $V^*=\overline{V}$ for the simulation. 

\section{Multiple Video Policy Comparisons}\label{sec:multivideocomparison}

To compare the algorithms, we have discussed for a large system, we simulated the algorithms for the system parameters used in \cite{HZhang}: 40,000 peers, 50 caches, and 2000 videos following a 0.8 Zipf popularity distribution, where each peer is connected to 4 random caches selected uniformly, forming a random graph. The system's total storage constraint is 2.5 times the entire video catalog, which is 5000 copies. The placements of multiple videos according to fixed whole storage placement policy, fixed fractional storage placement policy, adaptive whole storage placement policy, and adaptive fractional storage placement policy are shown in \autoref{fig:Simulation_policycopies_example}. The number of video copies stored in the system is plotted versus the videos listed in the order of decreasing popularity. The total numbers of peers served by the cache system are shown in \autoref{table:peersserved} for each multiple video placement policy. The adaptive fractional storage placement policy yields the maximum number of peers that are served by the cache system without help from the server for multiple videos.
\begin{table}[htb]
\caption{Total number of peers served by the cache system}
\centering
\begin{tabular}{c c c}
\hline\hline
Multiple video & Total \# of & Fraction of \\ 
placement policy & peers served & optimal performance \\ [0.5ex]
\hline
Fixed whole & 21747 & 69.9\% \\
Fixed fractional & 26746 & 84.6\%\\
Adaptive whole & 30092 & 95.8\%\\
Adaptive fractional & 31413 & 100\%\\
Hybrid & 31008 & 98.7\%\\
\hline
\end{tabular}
\label{table:peersserved}
\end{table}

For the same large system and total storage constraint, the placement of multiple videos according to the hybrid storage placement policy is shown in \autoref{fig:Simulation_policycopies_example2}. The number of video copies stored in the system is plotted versus the videos in the order of decreasing popularity. Roughly 100 of the most requested videos are stored using the fixed fractional storage placement policy and the remaining videos are using adaptive whole storage placement policy. The placement policy selected for each video by the hybrid storage placement policy is shown in \autoref{fig:Hybrid_choice}. The total number of peers served by the cache system is 31008, which is about 99\% of the performance of the optimal policy - adaptive fractional storage placement policy.

\begin{figure}[htb]
  \begin{center}
 \includegraphics[width=0.53\textwidth,height=0.25\textheight, clip]{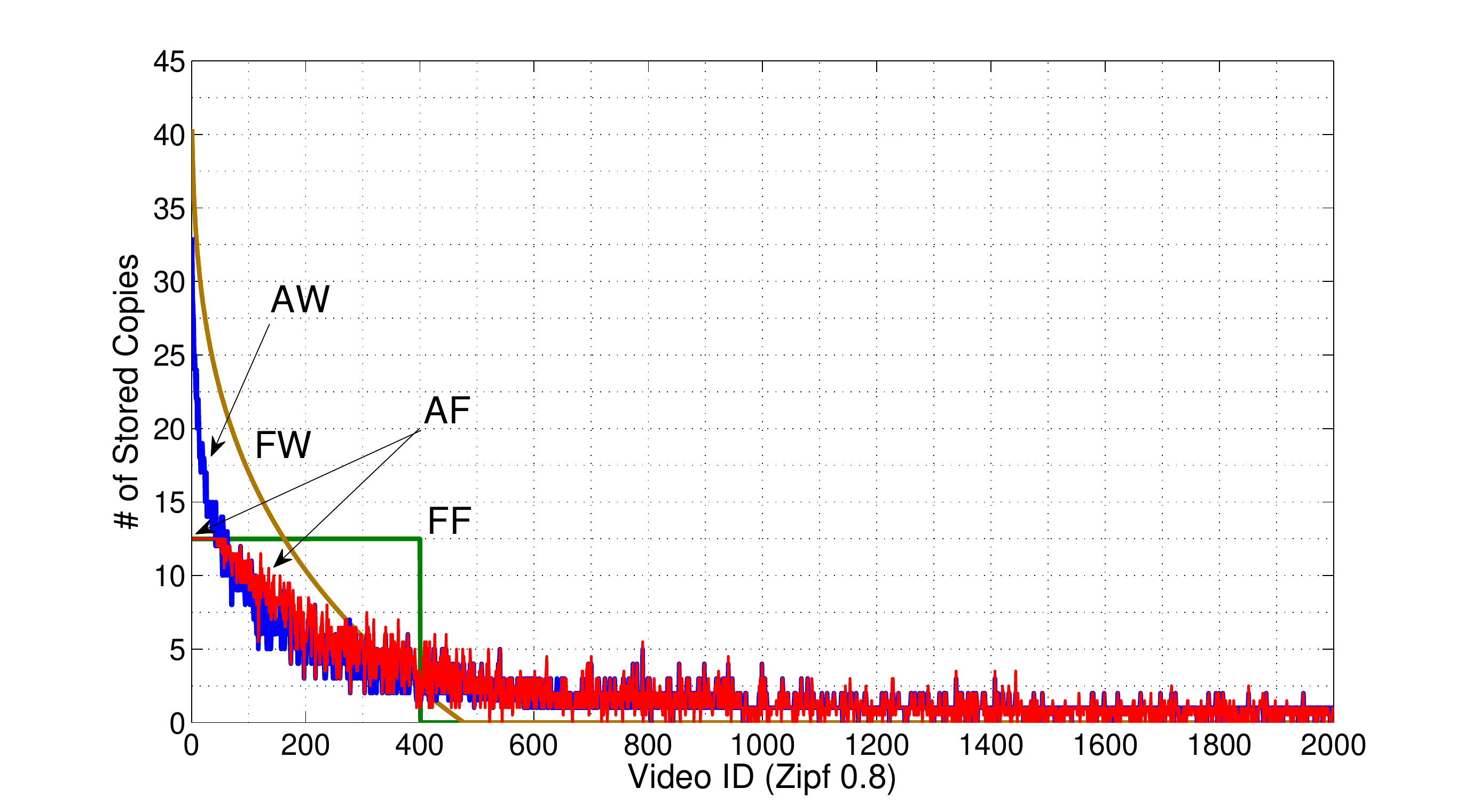}
    \caption{Video copies stored in a whole system of 50 caches, 40,000 peers and four random connections per peer}
    \label{fig:Simulation_policycopies_example}
  \end{center}
\vspace{-0.2in}
\end{figure}

\begin{figure}[htb]
  \begin{center}
 \includegraphics[width=0.53\textwidth,height=0.25\textheight, clip]{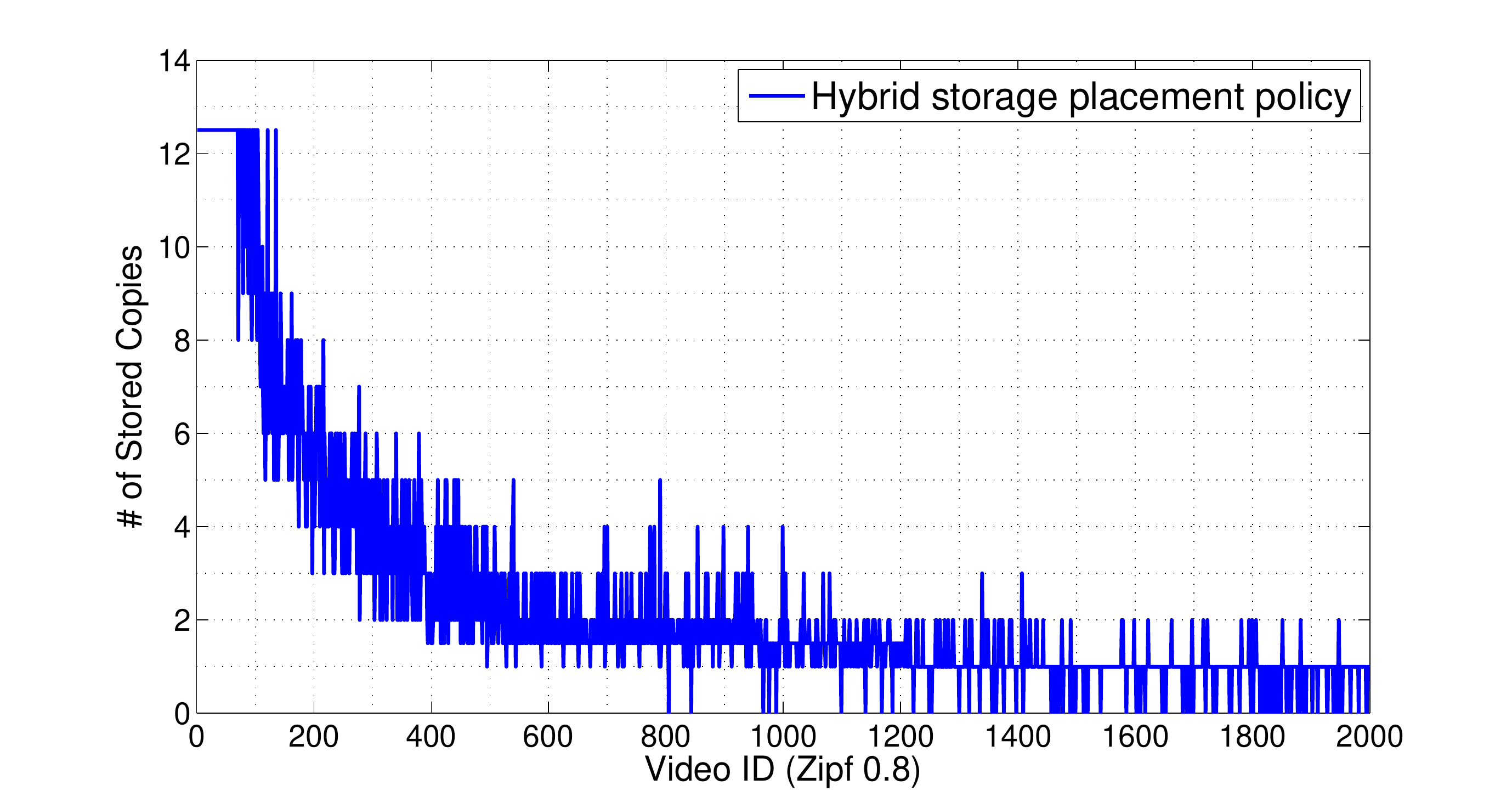}
    \caption{Video copies stored in a whole system of 50 caches, 40,000 peers and four random connections per peer}
    \label{fig:Simulation_policycopies_example2}
  \end{center}
\vspace{-0.2in}
\end{figure}

\begin{figure}[htb]
  \begin{center}
 \includegraphics[width=0.53\textwidth,height=0.25\textheight, clip]{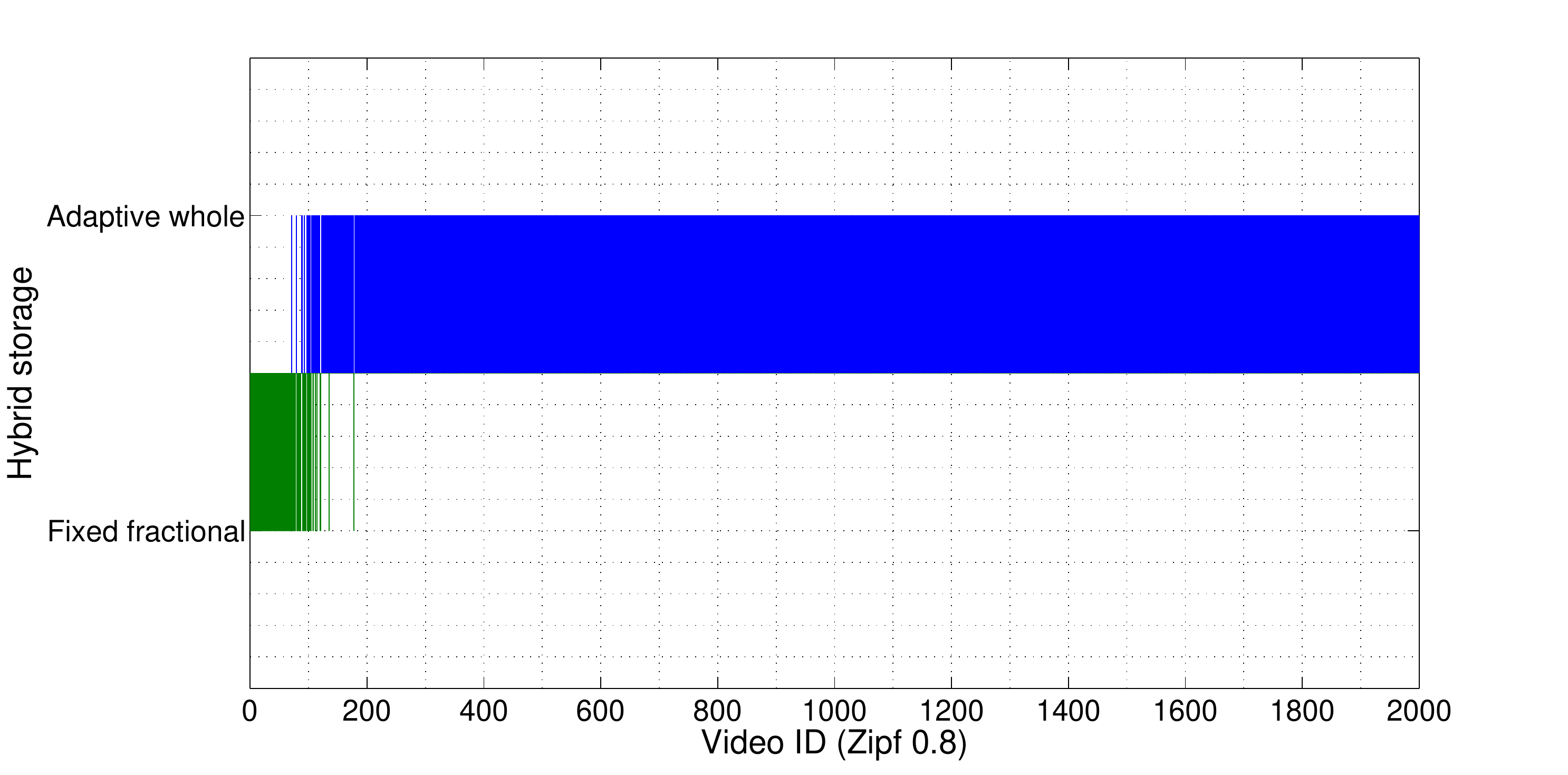}
    \caption{Choice of placement policy in a whole system of 50 caches, 40,000 peers and four random connections per peer}
    \label{fig:Hybrid_choice}
  \end{center}
\vspace{-0.2in}
\end{figure}

\section{Conclusion and Future Work} \label{sec:con}
In this paper, we studied the content placement problem for cache delivery VoD systems. Instead of performing content placement analysis on the whole system with multiple videos, we approached the content placement problem by analyzing the decoupled systems with only a single video. With the insights gained from single video placement policy analysis, we returned to the original content placement problem by constructing a general method for the four placement policies for multiple videos, which places copies of videos so as to maximize the total expected download rate for all users from the caches. We provided analytical and simulation results that answer the key question of how many more peers can be served using fractional storage or adaptive placement. Finally, based on these results, we proposed a hybrid storage placement policy for multiple video placement, which is a lower complexity alternative to the optimal content placement policy serving nearly as many peers.

In practice, \autoref{alg:hybridalg} can be simplified without much loss in performance by estimating in advance which videos to serve using the fixed fraction policy and which to serve using the adaptive whole placement policy. For example, for the system parameters given, we could simply use the fixed fractional policy for the 100 most popular videos. The total number of peers served is not sensitive to he threshold used.

\section{Acknowledgment} \label{sec:ack}
This work was supported by the National Science Foundation under grant no. CCF 10-16959.

\bibliographystyle{IEEEtran}
\bibliography{my-bib}
%

\end{document}